\documentclass[
amsmath, amssymb, 
aps, pra,
reprint, twocolumn,
superscriptaddress,
longbibliography,
floatfix,
%showkeys,
nofootinbib
]{revtex4}

\usepackage[a4paper, left=0.8in, right=0.8in, top=0.9in, bottom=0.9in]{geometry}

\usepackage[utf8]{inputenc}
\usepackage{amsmath,amssymb,amsthm}
\usepackage{braket}
\usepackage[urlcolor=blue, colorlinks=true,citecolor=blue,linkcolor=black]{hyperref}
\usepackage{cleveref}
\usepackage{graphicx}
\usepackage[caption=false]{subfig}
\usepackage{float}
\usepackage{tabularx}
\usepackage{amsfonts}
\usepackage{xcolor}
\usepackage{algorithm}
\usepackage{algpseudocode}

\newtheorem{theorem}{Theorem}

\newtheorem{definition}{Definition}
\newtheorem{corollary}{Corollary}

\begin{document}

\title{Equivariant quantum circuits for learning on weighted graphs}

\author{Andrea Skolik$^*$}
    \affiliation{Leiden University, Niels Bohrweg 1, 2333 CA Leiden, The Netherlands}
    \affiliation{Volkswagen Data:Lab, Ungererstra\ss e 69, 80805 Munich, Germany}
    \email[]{Corresponding author a.skolik@liacs.leidenuniv.nl}
    
\author{Michele Cattelan}
    \affiliation{Volkswagen Data:Lab, Ungererstra\ss e 69, 80805 Munich, Germany}
    \affiliation{Institute for Theoretical Physics, University of Innsbruck, A-6020 Innsbruck, Austria}
    
    \author{Sheir Yarkoni}
    \affiliation{Leiden University, Niels Bohrweg 1, 2333 CA Leiden, The Netherlands}
    \affiliation{Volkswagen Data:Lab, Ungererstra\ss e 69, 80805 Munich, Germany}
    
\author{Thomas Bäck}
    \affiliation{Leiden University, Niels Bohrweg 1, 2333 CA Leiden, The Netherlands}
    
\author{Vedran Dunjko}
\affiliation{Leiden University, Niels Bohrweg 1, 2333 CA Leiden, The Netherlands}
    
\date{\today}

\begin{abstract}
Variational quantum algorithms are the leading candidate for advantage on near-term quantum hardware. When training a parametrized quantum circuit in this setting to solve a specific problem, the choice of ansatz is one of the most important factors that determines the trainability and performance of the algorithm. In quantum machine learning (QML), however, the literature on ansatzes that are motivated by the training data structure is scarce. In this work, we introduce an ansatz for learning tasks on weighted graphs that respects an important graph symmetry, namely equivariance under node permutations. We evaluate the performance of this ansatz on a complex learning task, namely neural combinatorial optimization, where a machine learning model is used to learn a heuristic for a combinatorial optimization problem. We analytically and numerically study the performance of our model, and our results strengthen the notion that symmetry-preserving ansatzes are a key to success in QML.
\end{abstract}

\maketitle

\section{Introduction}
Hybrid quantum-classical algorithms in which a parametrized quantum circuit (PQC) is optimized by a classical algorithm to solve a specific problem, also known as variational quantum algorithms \cite{cerezo2021variational}, are expected to be the leading candidate for near-term quantum advantage due to their flexibility and the hope that their hybrid nature can make them robust to noise to some degree.  These types of algorithms can be applied in a variety of contexts, and it is known that the right choice of circuit structure, also known as the ansatz, is of key importance for the performance of these models. Much work has been dedicated to understand how circuits have to be structured to address problems in optimization \cite{farhi2014quantum,bennett2021quantum} or chemistry \cite{grimsley2018adapt,peruzzo2014variational}. For quantum machine learning however, it is largely unknown which type of ansatz should be used for a given type of data. In absence of an informed choice, general architectures as the hardware-efficient ansatz \cite{kandala2017hardware} are often used \cite{benedetti2019parameterized}. It is known that ansatzes with randomly selected structure scale badly as the width and depth of the circuit grows, most prominently because of the barren plateau phenomenon \cite{mcclean2018barren,arrasmith2021effect,marrero2021entanglement} where the gradients of a PQC vanish exponentially as the system size grows and thus render training impossible. This situation can be compared to the early days of neural networks (NNs), where fully connected feedforward NNs were the standard architecture. These types of NNs suffer from similar trainability issues as random quantum circuits \cite{smagt1998feed}. Recent breakthroughs in deep learning were in part possible because more efficient architectures that are directly motivated by the training data structure have been developed \cite{silver2016mastering,jumper2021highly,ramesh2021zero}. 
\begin{figure}[t]
\includegraphics[width=0.5\textwidth]{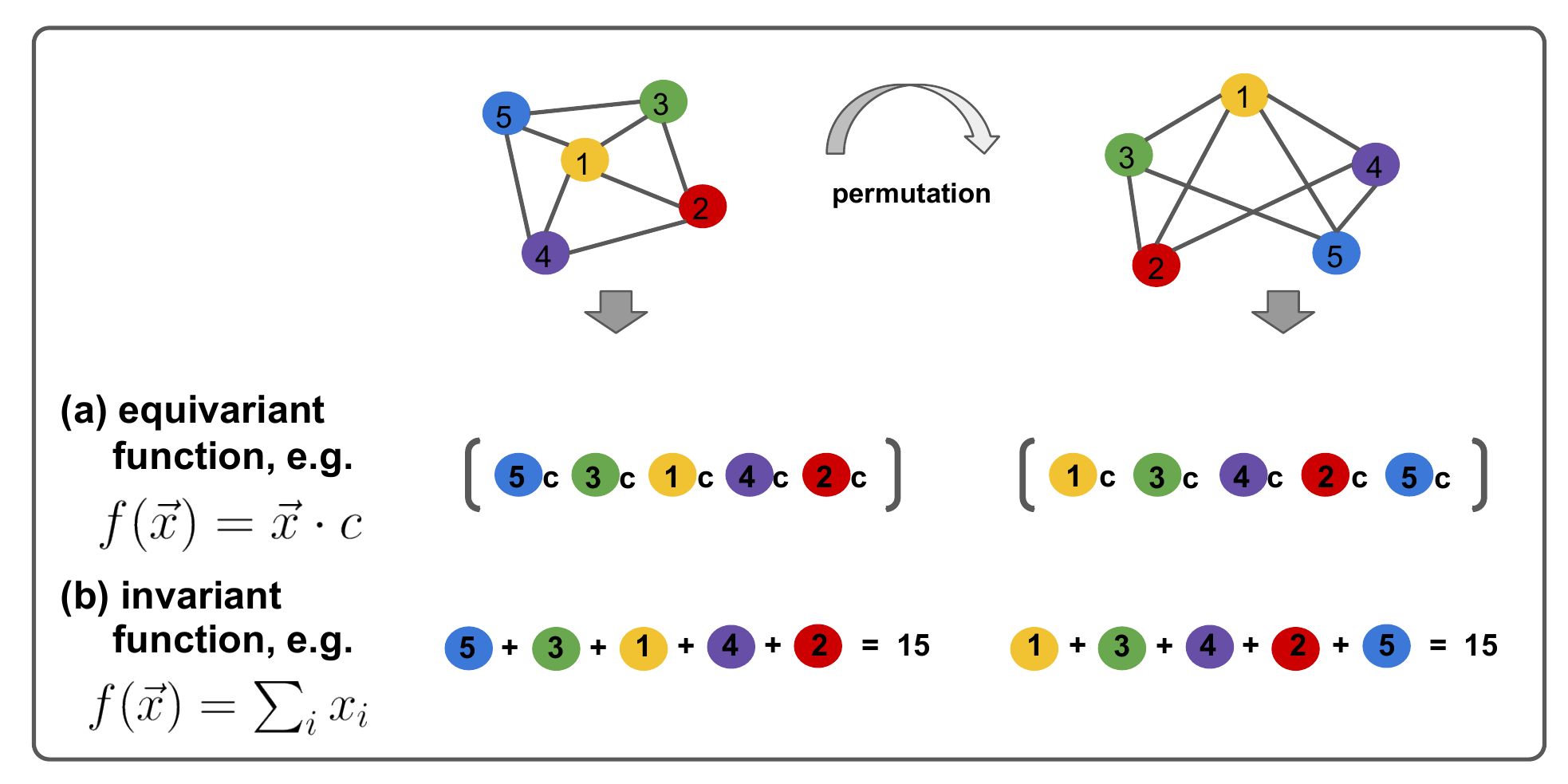}
\caption{Depiction of two functions that respect important symmetries of graphs. a) The permutation equivariant function will yield the same output values for each graph permutation, but reordered according to the reordering of nodes. The above example shows a simple function that takes node features as an input and multiplies them with a constant. b) An invariant function will yield the same output, regardless of the permutation. The above example shows a simple function that takes node features as input and computes their sum. Which type of symmetry is preferable depends on the task at hand.}
\label{fig:symmetries}
\end{figure}
In fact, a whole field that studies the mathematical properties of successful NN architectures has emerged in the past decade, known as geometric deep learning. This field studies the properties of common NN architectures, like convolutional NNs or graph NNs, through the lens of group theory and geometry and provides an understanding of why these structured types of models are the main drivers of recent advances in deep learning. The success of these models can largely be attributed to the fact that they preserve certain symmetries that are present in the training data. Graph NNs, for example, take graph-structured data as input and their layers are designed such that they respect one of two important graph symmetries: invariance or equivariance under permutation of vertices \cite{zhou2020graph}, as depicted in \Cref{fig:symmetries}. Graph-structured data is ubiquitous in real-world problems, for example to predict properties of molecules \cite{jumper2021highly} or to solve combinatorial optimization problems \cite{cappart2021combinatorial}. Even images can be viewed as special types of graphs, namely those defined on a lattice with nearest-neighbor connections. This makes graph NNs applicable in a multitude of contexts, and motivated a number of works that study quantum versions of these models \cite{verdon2019quantum,henry2021quantum,zheng2021quantum,mernyei2021equivariant}. However, the key questions of how to design symmetry-preserving ansatzes motivated by a concrete input data structure and how these ansatzes perform compared to those that are structurally unrelated to the given learning problem remain open.

In this work, we address these open questions by introducing a symmetry-preserving ansatz for learning problems where the training data is given in form of weighted graphs, and study its performance both numerically and analytically. To do this, we extend the family of ansatzes from \cite{mernyei2021equivariant} to incorporate weighted edges of the input graphs and prove that the resulting ansatz is equivariant under node permutations. To evaluate this ansatz on a complex learning task where preserving a given symmetry can yield a significant performance advantage, we apply it in a domain where classical graph NNs have been used extensively: neural combinatorial optimization \cite{cappart2021combinatorial}. In this setting, a model is trained to solve instances of a combinatorial optimization problem. Namely, we train our proposed ansatz to find approximate solutions to the Traveling Salesperson Problem (TSP). We numerically compare our ansatz to three non-equivariant ansatzes on instances with up to 20 cities (20 qubits), and show that the more the equivariance property of the ansatz is broken, the worse performance becomes and that a simple hardware-efficient ansatz completely fails on this learning task. Additionally, we analytically study the expressivity of our model at depth one, and show under which conditions there exists a parameter setting for any given TSP instance of arbitrary size for our ansatz that produces the optimal tour with the learning scheme that is applied in this work.

The neural combinatorial optimization approach presented in this work also provides an alternative method to employ near-term quantum computers to tackle combinatorial optimization problems. As problem instances are directly encoded into the circuit in form of graphs without the need to specify a cost Hamiltonian, this approach is even more frugal than that of the quantum approximate optimization algorithm (QAOA) \cite{farhi2014quantum} in terms of the requirement on the number of qubits and connectivity in cases where the problem encoding is non-trivial. For the TSP specifically, standard Hamiltonian encodings require $n^2$ variables where $n$ is the number of cities (or $n \log(n)$ variables at the cost of increased circuit depth) \cite{lucas2014ising}, whereas our approach requires only $n$ qubits and two-body interactions. We do note that the theoretical underpinnings and expected guarantees of performance of our method are very different and less rigorous than those of the QAOA, so the two are hard to compare directly. However, we establish a theoretical connection to the QAOA based on the structure of our ansatz, and in addition numerically compare QAOA performance on TSP instances with 5 cities to the performance of the proposed neural combinatorial optimization approach. We find that our ansatz at depth one outperforms the QAOA even at depth up to three. From a pragmatic point of view, linear scaling in qubit numbers w.r.t.\ number of problem variables, as opposed to e.g.\ quadratic scaling as in the case of the TSP, dramatically changes the applicability of quantum algorithms in the near- to mid-term.

Our work illustrates the merit of using symmetry-preserving ansatzes for QML on the example of graph-based learning, and underlines the notion that in order to successfully apply variational quantum algorithms for ML tasks in the future, the usage of ansatzes unrelated to the problem structure, which are popular in current QML research, is limited as problem sizes grow. This work motivates further study of ``geometric quantum learning" in the vein of the classical field of geometric deep learning, to establish more effective ansatzes for QML, as these are a prerequisite to efficiently apply quantum models on any practically relevant learning task in the near-term.

\section{Results}

In this section, we formally introduce the structure of our equivariant quantum circuit (EQC) for learning tasks on weighted graphs that we use in this work. Examples of graph-structured data that can be used as input in this type of learning task are images \cite{ramesh2021zero}, social networks \cite{fan2019graph} or molecules \cite{jumper2021highly}. In general, when learning based on graph data, there are two sets of features: node features and edge features. Depending on the specific learning task, it might be enough to use only one set of these features as input data, and the specific implementation of the circuit will change accordingly. As mentioned above, an example of an ansatz for cases where encoding node features suffices is the family of ansatzes introduced in \cite{mernyei2021equivariant}. In our case, we use both node and edge features to solve TSP instances. In case of the nodes, we encode whether a node (city) is already present in the partial tour at time step $t$ to inform the node selection process described later in \Cref{def:node_selection}. For the edges, we simply encode the edge weights of the graph as these correspond to the distances between nodes in the TSP instance's graph. In this work, we use one qubit per node in the graph, but in general multiple qubits per node are also possible. We discuss the details of this in the supplementary material. We now proceed to define the ansatz in terms of encoding node information in form of $\boldsymbol \alpha$ (see \Cref{def:annotated_graph}) and edge information in terms of the weighted graph edges $\varepsilon_{ij} \in \mathcal{E}$. For didactic reasons we relate the node and edge features to the concrete learning task that we seek to solve in this work, however, we note that this encoding scheme is applicable in the context of other learning tasks on weighted graphs as well.

\subsection{Equivariant quantum circuit}
\label{sec:ansatz_structure_equivariance}

Given a graph $\mathcal{G}(\mathcal{V}, \mathcal{E})$ with node features $\boldsymbol \alpha$ and weighted edges $ \mathcal{E}$, and trainable parameters $\boldsymbol \beta, \boldsymbol \gamma \in \mathbb{R}^{p} $, our ansatz at depth $p$ is of the following form
\begin{multline}\label{eq:ansatz}
    \ket{\mathcal{E}, \boldsymbol \alpha, \boldsymbol \beta, \boldsymbol \gamma}_p = U_N(\boldsymbol \alpha, \beta_p) U_G(\mathcal{E}, \gamma_p) \\ \dots U_N(\boldsymbol \alpha, \beta_1) U_G(\mathcal{E}, \gamma_1) \ket{s},
\end{multline}
where $\ket{s}$ is the uniform superposition of bitstrings of length $n$,
\begin{equation}
    \ket{s} = \frac{1}{\sqrt{2^n}} \sum_{x \in \{0, 1\}^n} \ket{x},
\end{equation}
$U_N(\boldsymbol \alpha, \beta_j)$ with $\mathrm{Rx}(\theta) = e^{-i \frac{\theta}{2} X}$, is defined as
\begin{equation}\label{eq:mixer}
    U_N(\boldsymbol \alpha, \beta_j) = \bigotimes_{l=1}^n \mathrm{Rx}(\alpha_l \cdot \beta_j),
\end{equation}
and $U_G(\mathcal{E}, \gamma_j)$ is
\begin{equation}
    U_G(\mathcal{E}, \gamma_j) = \mathrm{exp}(-i \gamma_j H_\mathcal{G})
\end{equation}
with $H_\mathcal{G} = \sum_{(i,j) \in \mathcal{E}} \varepsilon_{ij} \sigma_z^{(i)} \sigma_z^{(j)}$ and $\mathcal{E}$ are the edges of graph $\mathcal{G}$ weighted by $\varepsilon_{ij}$. A 5-qubit example of this ansatz can be seen in \Cref{fig:eqc}.

For $p=1$, we have
\begin{multline}\label{eq:extended_ansatz}
    \ket{\mathcal{E}, \boldsymbol \alpha, \beta, \gamma}_1 = U_N(\boldsymbol \alpha, \beta) U_G(\mathcal{E}, \gamma) \ket{s} \\
    = \frac{1}{\sqrt{2^n}} \\
    \cdot \sum_{x \in \{0, 1\}^n} 
    \underbrace{\left(\mathrm{cos} \frac{\alpha_1 \beta }{2} + \dots -i \mathrm{sin} \frac{\alpha_l \beta }{2} - \dots -i\mathrm{sin}\frac{\alpha_n \beta }{2} \right)}_\text{weighted bitflip terms}\\
    \cdot \underbrace{\mathrm{exp}\left(\sum_{(i,j) \in \mathcal{E}} \mathrm{diag}(Z_i Z_j)_{\ket{x}} \cdot -i \ \frac{\pi}{2} \gamma \varepsilon_{ij}\right)}_\text{edge weights} \ket{x},
\end{multline}
where $\mathrm{diag}(Z_i Z_j)_{\ket{x}} = \pm 1$ is the entry in the matrix corresponding to each $Z_i Z_j$ term, e.g., $I_1 \otimes \dots \otimes Z_i \otimes I_k \otimes \dots \otimes Z_j \otimes \dots \otimes I_n$, corresponding to the basis state $\ket{x}$. (E.g., the first term on the diagonal corresponds to the all-zero state, and so on.) We see that the first group of terms, denoted weighted bitflip terms, is a sum over products of terms that encode the node features. In other words, in the one-qubit case we get a sum over sine and cosine terms, in the two-qubit case we get a sum over products of pairs of sine and cosine terms, and so on. The terms in the second part of the equation denoted edge weights is the exponential of a sum over edge weight terms. As we start in the uniform superposition, each basis state's amplitude depends on all node and edge features, but with different signs and therefore different terms interfering constructively and destructively for every basis state. This can be regarded as a quantum version of the aggregation functions used in classical graph NNs, where the $k$-th layer of a NN aggregates information over the $k$-local neighborhood of the graph in a permutation equivariant way \cite{bronstein2021geometric}. In a similar fashion, the terms in \Cref{eq:extended_ansatz} aggregate node and edge information and become more complex with each additional layer in the PQC.

\begin{figure}[t]
\hspace*{-0.2in}
\vspace*{-0.3in}
\includegraphics[width=0.5\textwidth]{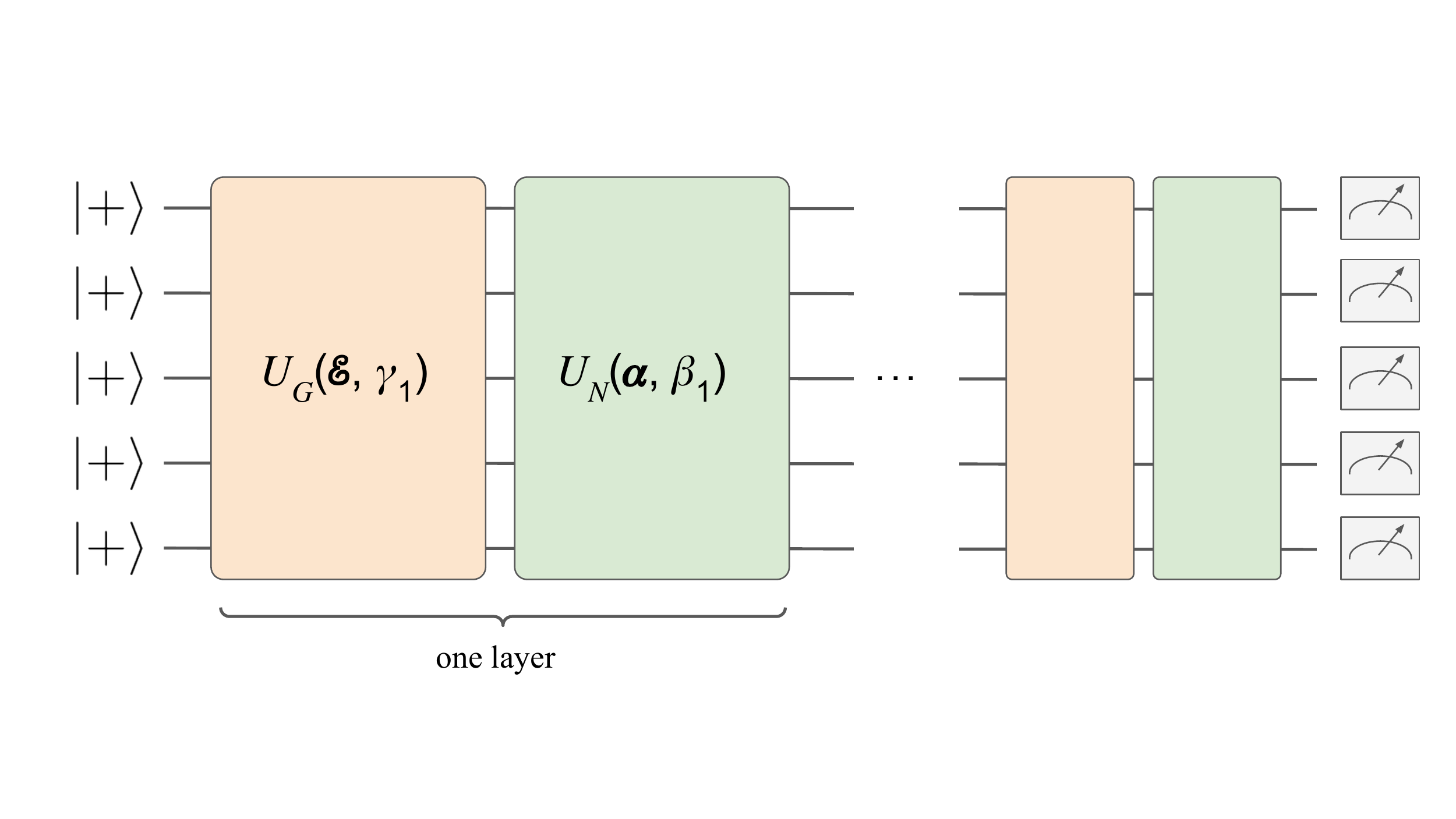}
\caption{Ansatz used in this work. Each layer consists of two parts: the first part $U_G$ encodes edge features, while the second part $U_N$ encodes node features. Each of the two parts is parametrized by one parameter $\beta_l, \gamma_l$, respectively.}
\label{fig:eqc}
\end{figure}

The reader may already have observed that this ansatz is closely related to an ansatz that is well-known in quantum optimization: that of the quantum approximate optimization algorithm \cite{farhi2014quantum}. Indeed, our ansatz can be seen as a special case of the QAOA, where instead of using a cost Hamiltonian to encode the problem, we directly encode instances of graphs and apply the ``mixer terms" in \Cref{eq:mixer} only to nodes not yet in the partial tour. This correspondence will later let us use known results for QAOA-type ansatzes at depth one \cite{ozaeta2022expectation} to derive exact analytical forms of the expectation values of our ansatz, and use these to study its expressivity. 

As our focus is on implementing an ansatz that respects a symmetry that is useful in graph learning tasks, namely an equivariance under permutation of vertices of the input graph, we now show that each part of our ansatz respects this symmetry.

\begin{theorem}[Permutation equivariance of the ansatz]\label{thm:ansatz_perm_equivariance}
Let the ansatz of depth $p$ be of the type as defined in \Cref{eq:ansatz} with initial state $\ket{+}^{\otimes n}$ and parameters $\boldsymbol \beta, \boldsymbol \gamma \in \mathbb{R}^p$, that represents an instance of a graph $\mathcal{G}$ with nodes $\mathcal{V}$ and the list of edges $\mathcal{E}$ with corresponding edge weights $\varepsilon_{ij}$, and node features $\boldsymbol \alpha \in \mathbb{R}^n$ with $n=|\mathcal{V}|$. Let $\sigma$ be a permutation of the vertices in $\mathcal{V}$, $P_{\sigma} \in \mathbb{B}^{n \times n}$ the corresponding permutation matrix that acts on the weighted adjacency matrix $A$ of $\mathcal{G}$, and $\Tilde{P}_{\sigma} \in \mathbb{B}^{2^n \times 2^n}$ a matrix that maps the tensor product $\ket{v_1} \otimes \ket{v_2} \otimes \dots \otimes \ket{v_n}$ with $\ket{v_i} \in \mathbb{C}^2$ to $\ket{v_{\Tilde{p}_{\sigma}(1)}} \otimes \ket{v_{\Tilde{p}_{\sigma}(2)}} \otimes \dots \otimes \ket{v_{\Tilde{p}_{\sigma}(n)}}$. Then, the following relation holds,
\begin{equation}
    \ket{\mathcal{E}_A, \boldsymbol \alpha, \boldsymbol \beta, \boldsymbol \gamma}_p = \Tilde{P}_{\sigma} \ket{\mathcal{E}_{(P_{\sigma}^T A P_{\sigma})},  P_{\sigma}^T \boldsymbol\alpha, \boldsymbol \beta, \boldsymbol \gamma}_p,
\end{equation}
where $\mathcal{E}_{(\cdot)}$ denotes a specific permutation of the adjacency matrix $A$ of the given graph. We call an ansatz that satisfies this property permutation equivariant.
\end{theorem}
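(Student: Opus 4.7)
The plan is to reduce the equivariance identity to three elementary facts about the building blocks of the ansatz and then combine them layer by layer. Multiplying both sides of the claim by $\Tilde{P}_\sigma^{-1}$, it suffices to show that pushing $\Tilde{P}_\sigma^{-1}$ through the entire circuit converts the graph data $(\mathcal{E}_A, \boldsymbol\alpha)$ into $(\mathcal{E}_{P_\sigma^T A P_\sigma}, P_\sigma^T \boldsymbol\alpha)$ while leaving the scalar parameters $\boldsymbol\beta, \boldsymbol\gamma$ untouched.

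The three ingredients I would establish are the following. (i) \emph{Invariance of the initial state}: since $\ket{s} = \ket{+}^{\otimes n}$ is a tensor product of identical single-qubit states and $\Tilde{P}_\sigma$ merely permutes tensor factors, one has $\Tilde{P}_\sigma \ket{s} = \ket{s}$. (ii) \emph{Conjugation of the node-feature block}: because $U_N(\boldsymbol\alpha, \beta) = \bigotimes_{l=1}^n \mathrm{Rx}(\alpha_l \beta)$ is a product of single-qubit rotations indexed by qubit, conjugation by $\Tilde{P}_\sigma^{-1}$ acts as a relabeling of tensor factors and yields a new tensor product whose angles are exactly those indexed by $P_\sigma^T \boldsymbol\alpha$. (iii) \emph{Conjugation of the edge block}: working at the level of the Hamiltonian $H_\mathcal{G} = \sum_{(i,j)\in\mathcal{E}} \varepsilon_{ij} \sigma_z^{(i)} \sigma_z^{(j)}$, conjugation by $\Tilde{P}_\sigma^{-1}$ sends each single-qubit factor $\sigma_z^{(i)}$ to $\sigma_z^{(\sigma(i))}$; reindexing the sum then identifies the new coefficients as the entries of $P_\sigma^T A P_\sigma$, and exponentiating lifts this to the analogous identity for $U_G$.

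Given (i)--(iii), the theorem follows by a straightforward induction on the depth $p$. Each layer pair $U_N(\boldsymbol\alpha, \beta_j)\, U_G(\mathcal{E}_A, \gamma_j)$ commutes with $\Tilde{P}_\sigma^{-1}$ up to the replacement $(\boldsymbol\alpha, A) \mapsto (P_\sigma^T \boldsymbol\alpha, P_\sigma^T A P_\sigma)$, the scalar parameters $\beta_j, \gamma_j$ are unaffected, and (i) handles the residual $\Tilde{P}_\sigma^{-1}$ acting on $\ket{s}$ at the end of the chain. The main obstacle is not conceptual but notational: one must pin down a single consistent convention linking the permutation $\sigma$, the classical matrix $P_\sigma$ acting on node-feature vectors and adjacency matrices, and the unitary $\Tilde{P}_\sigma$ acting on computational basis states, so that the qubit relabeling effected by $\Tilde{P}_\sigma^{-1}$ exactly matches the transformations $P_\sigma^T \boldsymbol\alpha$ and $P_\sigma^T A P_\sigma$ on the classical side. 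Once this dictionary is fixed, each of (i)--(iii) reduces to a routine relabeling of a sum or a tensor product, and no further analytic work is required.
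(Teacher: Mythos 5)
Your proposal is correct and follows essentially the same route as the paper's proof: establishing permutation invariance of $\ket{+}^{\otimes n}$, and the conjugation identities $\Tilde{P}_{\sigma} U_{G}(\mathcal{E}_A, \gamma_l) \Tilde{P}_{\sigma}^{\dagger} = U_{G}(\mathcal{E}_{(P_{\sigma}^T A P_{\sigma})}, \gamma_l)$ (via $U e^{-iH}U^\dagger = e^{-iUHU^\dagger}$ and relabeling of the $\sigma_z^{(i)}\sigma_z^{(j)}$ terms) and $\Tilde{P}_{\sigma} U_{N}(\boldsymbol\alpha, \beta_l) \Tilde{P}_{\sigma}^{\dagger} = U_{N}(P_{\sigma}^T \boldsymbol\alpha, \beta_l)$, then telescoping through the layers. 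Your layer-by-layer induction is just an explicit phrasing of the same combination step the paper performs.
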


As mentioned before, our ansatz is closely related to those in \cite{mernyei2021equivariant}, and the authors of this work prove permutation equivariance of unitaries that are defined in terms of unweighted adjacency matrices of graphs. In order to prove equivariance of our circuit, we have to generalize their result to the case where a weighted graph is encoded in the form of a Hamiltonian, and parametrized by a set of free parameters as described in \Cref{eq:ansatz}. In the non-parametrized case this is trivial, as edge weights and node features are directly permuted as a consequence of the permutation of the graph. When introducing parameters to the node and edge features, however, we have to make sure that the parameters themselves preserve equivariance, as the parameters are not tied to the adjacency matrix but to the circuit itself. To guarantee this, we make the parametrization itself permutation invariant by assigning one node and edge parameter per layer, respectively, and this makes us arrive at the QAOA-type parametrization shown in \Cref{eq:ansatz}. Another difference of our proof to that in \cite{mernyei2021equivariant} is that we consider a complete circuit including its initial state, instead of only guaranteeing that the unitaries that act on the initial state are permutation equivariant. We provide the detailed proof of equivariance of our ansatz in the supplementary material.

The above definition and proof are given in terms of a learning problem where we map one vertex to one qubit directly. However, settings where we require more than one qubit to encode node information are easily possible with this type of architecture as well. In order to preserve equivariance of our ansatz construction, three conditions have to hold: i) the initial state of the circuit has to be permutation invariant or equivariant, ii) the two-qubit gates used to encode edge weights have to commute, iii) the parametrization of the gates has to be permutation invariant. In the case where each vertex or edge is represented with more than just one gate per layer, one has freedom on how to do this as long as the above i)-iii) still hold. A simple example is when each vertex is represented by $m$ qubits: i) the initial state remains to be the uniform superposition, ii) the topology of the two-qubit gates that represent edges has to be changed according to the addition of the new qubits, but $ZZ$-gates can still be used to encode the information, iii) the parametrization is the same as in the one-qubit-per-vertex case.

\subsection{Trainability of ansatz}

Our goal in this work is to introduce a problem-tailored ansatz for a specific data type that provides trainability advantages compared to unstructured ansatzes. One important question that arises in this context is that of barren plateaus, where the variance of derivatives for random circuits vanishes exponentially with the system size \cite{mcclean2018barren}. This effect poses challenges for scaling up circuit architectures like the hardware-efficient ansatz \cite{kandala2017hardware}, as even at a modest number of qubits and layers a quantum model like this can become untrainable \cite{cerezo2021cost,marrero2021entanglement,arrasmith2021effect}. Therefore it is important to address the presence of barren plateaus when introducing an ansatz. In a recent work \cite{schatzki2022theoretical}, it has been proven that barren plateaus are not present in circuits that are equivariant under the symmetric group $S_n$, namely the group of permutations on $n$ elements, in this case all permutations over the qubits. While our circuit is also permutation equivariant, we define permutations based on the input graphs and not the qubits themselves, so our approach differs from the equivariant quantum neural networks in \cite{schatzki2022theoretical} as a) the incorporation of edge weights into the unitaries prevents the unitaries from commuting with all possible permutations of qubits, and b) multiple qubits can potentially correspond to one vertex. While permutation equivariance poses some restrictions on the expressibility of the ansatz and one would expect a better scaling of gradients than in, e.g., hardware-efficient types of circuits, the results of \cite{schatzki2022theoretical} do not directly translate to our work for the above reasons. 

To get additional insight, one can also turn to results on barren plateaus related to QAOA-type circuits, due to the structural similarity that our ansatz has to them. The authors of \cite{larocca2022diagnosing} investigate the scaling of the variance of gradients of two related types of ansatzes. They characterize ansatzes given by the following two Hamiltonians: the transverse field Ising model (TFIM),
\begin{equation}
    H_{\mathrm{TFIM}} = \sum_{i=1}^{n_f} Z_i Z_{i+1} + h_x \sum_{i=1}^n X_i,
\end{equation}
where $n_f = n-1$ ($n_f = n$) for open (periodic) boundary conditions, and a spin glass (SG), 
\begin{equation}
    H_{\mathrm{SG}} = \sum_{i < j} h_i Z_i + J_{ij} Z_i Z_j + \sum_{i=1}^n X_i,
\end{equation}
with $h_i, J_{ij}$ drawn from a Gaussian distribution. Based on the generators of those two ansatzes, the authors of \cite{larocca2022diagnosing} show that an ansatz that consists of layers given by the TFIM Hamiltonian has a favorable scaling of gradients. An ansatz that consists of layers given by $H_{\mathrm{SG}}$, on the other hand, does not. Considering the results for the two above Hamiltonians, one can expect that whether our ansatz exhibits barren plateaus will strongly depend on the encoded graphs, i.e., the connectivity, edge weights and node features. Which types of graphs lead to a favorable scaling of gradients, and for what learning tasks our ansatz exhibits good performance at a number of layers polynomial in the input size, is an interesting question that we leave for future work.

Additionally to barren plateaus that are a result of the randomness of the circuit, there is a type of barren plateau that is caused by hardware noise, called noise-induced barren plateaus (NIBPs) \cite{wang2021noise}. This problem can not be directly mitigated by the choice of circuit architecture, as eventually all circuit architectures are affected by hardware noise, especially when they become deeper. We do not expect that our circuit is resilient to NIBPs, however, the numerical results in \Cref{sec:num_results} show that the EQC already performs well with only one layer for the environment we study in this work as we scale up the problem size. This provides hope that, at least in terms of circuit depth, the EQC will scale favorably in the number of layers as the number of qubits in the circuit is increased, and therefore the effect of NIBPs will be less severe than for other circuit architectures with the same number of qubits.

Another important question for the training of ML models is that of data efficiency, i.e., how many training data points are required to achieve a low generalization error. Indeed, one of the key motivating factors behind the design of geometric models that preserve symmetries in the training data is to reduce the size of the training data set. In the classical literature, it was shown that geometric models require fewer training data and as a result often fewer parameters as models that do not preserve said symmetries \cite{mei2021learning}. Recent work showed that this is also true for $S_n$-equivariant quantum models \cite{schatzki2022theoretical}, where the authors give an improved bound on the generalization error compared to the bounds that were previously shown to exist for general classes of PQCs \cite{caro2022generalization}. However, the results from \cite{schatzki2022theoretical} do again not directly translate to our approach as stated in the context of barren plateaus above.

\subsection{Quantum neural combinatorial optimization with the EQC}
\label{sec:qgnn_architecture}

\begin{figure*}[t]
\includegraphics[width=1\textwidth]{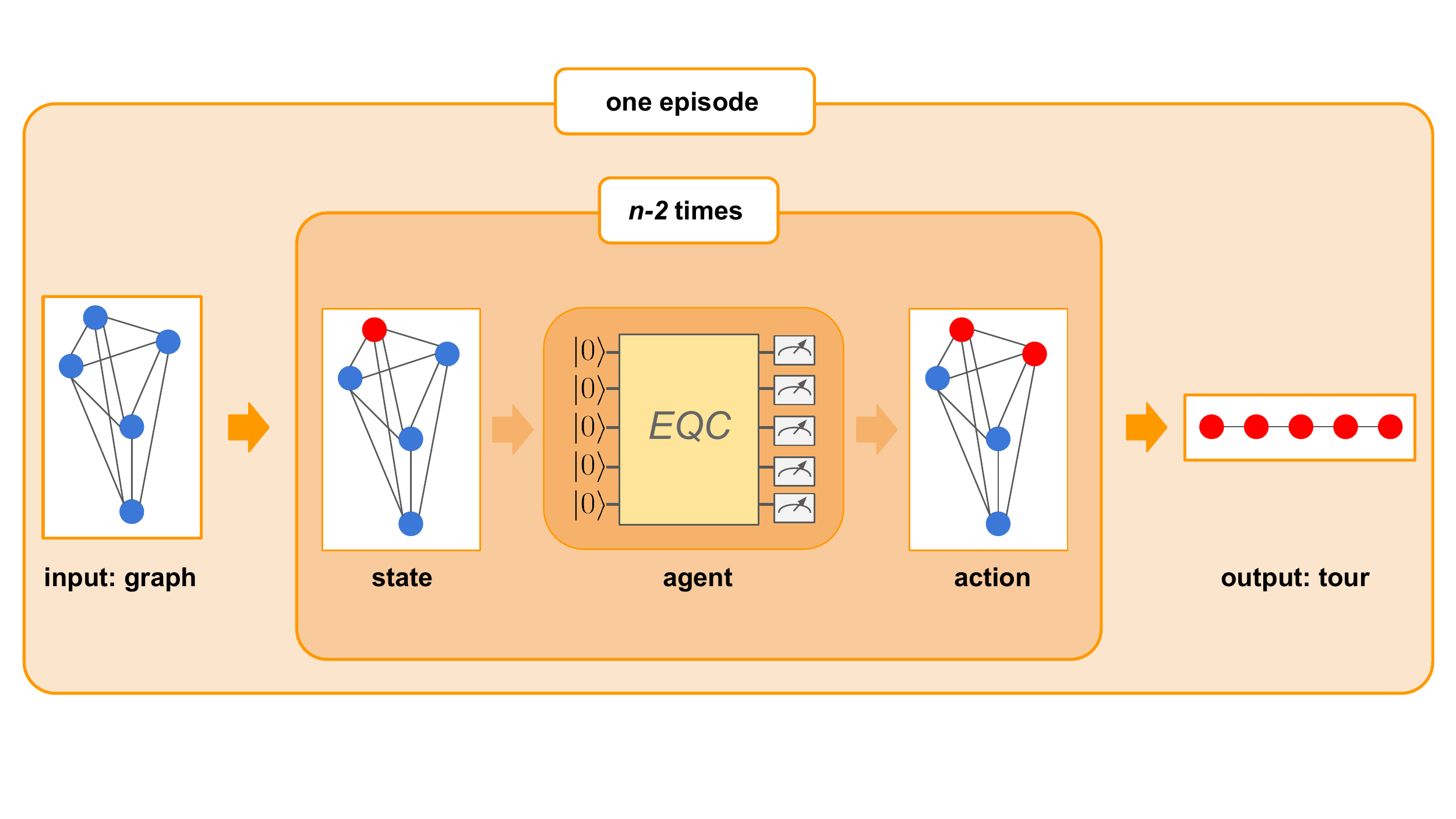}
\vspace*{-0.3in}
\caption{An illustration of one episode in the TSP environment. The agent receives a graph instance as input, where the first node is already added to the proposed tour (marked red), which can always be done without loss of generality. In each time step, the agent proposes which node should be added to the tour next. After the second-to-last node has been selected, the agent returns a proposed tour.}
\label{fig:qncorl_episodes}
\end{figure*}

Combinatorial optimization problems are ubiquitous, be it in transportation and logistics, electronics, or scheduling tasks. These types of problems have also been studied in computer science and mathematics for decades. Many interesting combinatorial optimization problems that are relevant in industry today are NP-hard, so that no general efficient solution is expected to exist. For this reason, heuristics have gained much popularity, as they often provide high-quality solutions to real-world instances of many NP-hard problems. However, good heuristics require domain expertise in their design and they have to be defined on a per-problem basis. To circumvent hand-crafting heuristic algorithms, machine learning approaches for solving combinatorial optimization problems have been studied. One line of research in this area investigates using NNs to learn algorithms for solving combinatorial optimization problems \cite{cappart2021combinatorial,bengio2021machine}, which is known as neural combinatorial optimization (NCO). Here, NNs learn to solve combinatorial optimization problems based on data, and can then be used to find approximate solutions to arbitrary instances of the same problem. First approaches in this direction used supervised learning to find approximate solutions based on NN techniques from natural language processing \cite{vinyals2015pointer}. A downside of the supervised approach is that it requires access to a large amount of training data in form of solved instances of the given problem, which requires solving many NP-hard instances of the problem to completion. At large problem sizes, this is a serious impediment for the practicability of this method. For this reason, reinforcement learning (RL) was introduced as a technique to train these heuristics. In RL, an agent does not learn based on a given data set, but by interacting with an environment and gathering information in form of rewards in a trial-and-error fashion. These RL-based approaches have been shown to successfully solve even instances of significant size in problems with a geometric structure like the convex hull problem \cite{dai2017learning}, chip placement \cite{mirhoseini2020chip} or the vehicle routing problem \cite{nazari2018reinforcement}. To implement NCO in this work we use a RL method called Q-learning following \cite{dai2017learning}, details of which are presented in \Cref{sec:related_work}.

Our goal is to use the ansatz described in \Cref{sec:ansatz_structure_equivariance} to train a model that, once trained, implements a heuristic to produce tours for previously unseen instances of the TSP. The TSP consists of finding a permutation of a set of cities such that
the resulting length of a tour visiting each city in this sequence is minimal. The heuristic takes as input an instance of the TSP problem in form of a weighted 2D Euclidean graph $\mathcal{G}(\mathcal{V}, \mathcal{E})$ with $n = |\mathcal{V}|$ vertices representing the cities and edge weights $\varepsilon_{ij} = \mathrm{d}(v_i, v_j)$, where $\mathrm{d}(v_i, v_j)$ is the Euclidean distance between nodes $v_i$ and $v_j$. Specifically, we are dealing with the symmetric TSP, where the edges in the graph are undirected. Given $\mathcal{G}$, the algorithm constructs a tour in $n-2$ steps. Starting from a given (fixed) node in the proposed tour $T_{t=1}$, in each step $t$ of the tour selection process the algorithm proposes the next node (city) in the tour. Once the second-before-last node has been added to the tour, the last one is also directly added, hence the tour selection process requires $n-2$ steps. This can also be viewed as the process of successively marking nodes in a graph as they are added to a tour. In order to refer to versions of the input graph at different time steps where the nodes that are already present in the tour are marked, we now define the annotated graph.

\begin{definition}[Annotated graph]\label{def:annotated_graph}
For a graph $\mathcal{G}(\mathcal{V}, \mathcal{E})$, we call $\mathcal{G}(\mathcal{V}, \mathcal{E}, \boldsymbol \alpha^{(t)})$ the annotated graph at time step $t$. The vector $\boldsymbol \alpha^{(t)} \in \{0, \pi\}^n$ specifies which nodes are already in the tour $T_t$ ($\alpha_i^{(t)} = 0$) and which nodes are still available for selection ($\alpha_i^{(t)} = \pi$).
\end{definition}
In each time step of an episode in the algorithm, the model is given an annotated graph as input. Based on the annotated graph, the model should select the next node to add to the partial tour $T_t$ at step $t$. The annotation can be used to partition the nodes $\mathcal{V}$ into the set of available nodes $\mathcal{V}_a = \{v_i | \alpha_i^{(t)} = \pi\}$ and the set of unavailable nodes $\mathcal{V}_u = \{v_i | \alpha_i^{(t)} = 0\}$. The node selection process can now be defined as follows.

\begin{definition}[Node selection]\label{def:node_selection}
Given an annotated graph $\mathcal{G}(\mathcal{V}, \mathcal{E}, \boldsymbol \alpha^{(t)})$, the node selection process consist of selecting nodes in a tour in a step-wise fashion. To add a node to the partial tour $T_t$, the next node is selected from the set of available nodes $\mathcal{V}_a$. The unavailable nodes $\mathcal{V}_u$ are ignored in this process.
\end{definition}
After $n-2$ steps, the model has produced a tour $T_n$. A depiction of this process can be found in \Cref{fig:qncorl_episodes}. To assess the quality of the generated tour, we compare the tour length $c(T_n)$ to the length of the optimal tour $c(T^*)$, where
\begin{equation}
    c(T) = \sum_{\{i,j\} \in \mathcal{E}_{T}} \varepsilon_{ij}
\end{equation}
is the sum of edge weights (distances) for all edges between the nodes in the tour, with $\mathcal{E}_{T} \subset \mathcal{E}$. We measure the quality of the generated tour in form of the approximation ratio
\begin{equation}\label{eq:approx_ratio}
    \frac{c(T_n)}{c(T^*)}.
\end{equation}
In order to perform Q-learning we need to define a reward function that provides feedback to the RL agent on the quality of its proposed tour. The rewards in this environment are defined by the difference in overall length of the partial tour $T_t$ at time step $t$, and upon addition of a given node $v_l$ at time step $t+1$:
\begin{equation}
    r(T_t, v_l) = - c(T_{(t+1, v_l)}) + c(T_t).
\end{equation}
Note that we use the negative of the cost as a reward, as a Q-learning agent will always select the action that leads to the maximum expected reward.

The learning process is defined in terms of a DQN algorithm, where the Q-function approximator is implemented in form of a PQC (which is described in detail in \Cref{sec:ansatz_structure_equivariance}). Here, we define the TSP in terms of an RL environment, where the set of states $\mathcal{S} = \{\mathcal{G}_i(\mathcal{V}, \mathcal{E}, \boldsymbol \alpha^{(t)}) ~\mathrm{for}~ i = 1, \dots, |\mathcal{X}| ~\mathrm{and}~ t = 1, \dots, n-1\}$ consists of all possible annotated graphs (i.e., all possible configurations of values of $\boldsymbol \alpha^{(t)}$) for each instance $i$ in the training set $\mathcal{X}$. This means that the number of states in this environment is $|\mathcal{S}| = 2^{n-1}|\mathcal{X}|$. The action that the agent is required to perform is selecting the next node in each step of the node selection process described in \Cref{def:node_selection}, so the action space $\mathcal{A}$ consists of a set of indices for all but the first node in each instance (as we always start from the first node in terms of the list of nodes we are presented with for each graph, so $\alpha_1^{(t)} = 0, ~\forall~ t$), and $|\mathcal{A}| = n-1$.

The Q-function approximator gets as input an annotated graph, and returns as output the index of the node that should next be added to the tour. Which index this is, is decided in terms of measuring an observable corresponding to each of the available nodes $\mathcal{V}_a$. Depending on the last node added to the partial tour, denoted as $v_{t-1}$, the observable for each available node $v_l$ is defined as
\begin{equation}\label{eq:observables}
    O_{v_l} = \varepsilon_{v_{t-1}, v_{l}} Z_{v_{t-1}}Z_{v_{l}}
\end{equation}
weighted by the edge weight $\varepsilon_{v_{t-1}, v_{l}}$, and the Q-value corresponding to each action is
\begin{multline}\label{eq:q_val}
    Q(\mathcal{G}_i(\mathcal{V}, \mathcal{E}, \boldsymbol \alpha^{(t)}), v_l) = \\
    \bra{\mathcal{E}, \boldsymbol \alpha^{(t)}, \boldsymbol \beta, \boldsymbol \gamma}_p O_{v_l} \ket{\mathcal{E}, \boldsymbol \alpha^{(t)}, \boldsymbol \beta, \boldsymbol \gamma}_p,
\end{multline}
where the exact form of $\ket{\mathcal{E}, \boldsymbol \alpha^{(t)}, \boldsymbol \beta, \boldsymbol \gamma}_p$ is described in \Cref{sec:ansatz_structure_equivariance}. The node that is added to the tour next is the one with the highest Q-value,
\begin{equation}
\mathrm{argmax}_{v_l} Q(\mathcal{G}_i(\mathcal{V}, \mathcal{E}, \boldsymbol \alpha^{(t)}), v_l).   
\end{equation}
All unavailable nodes $v_l \in \mathcal{V}_u$ are not included in the node selection process, so we manually set their Q-values to a large negative number to exclude them, e.g., \[Q(\mathcal{G}_i(\mathcal{V}, \mathcal{E}, \boldsymbol \alpha^{(t)}), v_l) = -10000 ~\forall~ v_l \in \mathcal{V}_u.\]

We also define a stopping criterion for our algorithm, which corresponds to the agent solving the TSP environment for a given instance size. As we aim at comparing the results of our algorithm to optimal solutions in this work, we have access to a labeled set of instances and define our stopping criterion based on these. However, note that the optimal solutions are not required for training, as a stopping criterion can also be defined in terms of number of episodes or other figures of merit that are not related to the optimal solution. In this work, the environment is considered as solved and training is stopped when the average approximation ratio of the past 100 iterations is $< 1.05$, where an approximation ratio of 1 means that the agent returns the optimal solution for the instances it was presented with in the past 100 episodes. We do not set the stopping criterion at optimality for two reasons: i) it is unlikely that the algorithm finds a parameter setting that universally produces the optimal tour for all training instances, and ii) we want to avoid overfitting on the training data set. If the agent does not fulfill the stopping criterion, the algorithm will run until a predefined number of episodes is reached. In our numerical results shown in \Cref{sec:num_results}, however, most agents do not reach the stopping criterion of having an average approximation ratio below 1.05, and run for the predefined number of episodes instead. Our goal is to generate a model that is, once fully trained, capable of solving previously unseen instances of the TSP. 

We showed in \Cref{sec:ansatz_structure_equivariance} that our ansatz of arbitrary depth is permutation equivariant. Now we proceed to show that the Q-values that are generated from measurements of this PQC, and the tour generation process as described in \Cref{sec:qgnn_architecture} are equivariant as well. While the equivariance of all components of an algorithm is not a pre-requisite to harness the advantage gained by an equivariant model, knowing which parts of our learning strategy fulfill this property provides additional insight for studying the performance of our model later. As we show that the whole node selection process is equivariant, we know that the algorithm will always generate the same tour for every possible permutation of the input graph for a fixed setting of parameters, given that the model underlying the tour generation process is equivariant. This is not necessarily true for a non-equivariant model, and simply by virtue of giving a permuted graph as input, the algorithm can potentially return a different tour.

\begin{theorem}[Equivariance of Q-values]\label{thm:equi_q_values}
Let $Q(\mathcal{G}(\mathcal{V}, \mathcal{E}, \boldsymbol \alpha), v_l) = Q(\mathcal{G}, v_l)$ be a Q-value as defined in \Cref{eq:q_val}, where we drop instance-specific sub- and superscripts for brevity. Let $\sigma$ be a permutation of $n = |\mathcal{V}|$ elements, where the $l$-th element corresponds to the $l$-th vertex $v_l$ and $\sigma_Q$ be a permutation that reorders the set of Q-values $Q (\mathcal{G}) = \{Q(\mathcal{G}, v_1), \dots, Q(\mathcal{G}, v_n)\}$ in correspondence to the reordering of the vertices by $\sigma$. Then the Q-values $Q(\mathcal{G})$ are permutation equivariant,
\begin{equation}
    Q(\mathcal{G}) = \sigma_Q Q(\mathcal{G}_{\sigma}),
\end{equation}
where $\mathcal{G}_{\sigma}$ is the permuted graph.
\end{theorem}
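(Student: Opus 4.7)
The plan is to reduce the equivariance of the Q-values to the equivariance of the underlying ansatz state (Theorem~\ref{thm:ansatz_perm_equivariance}) together with the way the observable $O_{v_l}$ transforms under the qubit-relabelling unitary $\tilde{P}_{\sigma}$. Concretely, I want to show the pointwise identity $Q(\mathcal{G}, v_l) = Q(\mathcal{G}_{\sigma}, v_{\sigma^{-1}(l)})$ for every $l$, which is precisely what the reordering map $\sigma_Q$ in the statement encodes.

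First I would unpack the Q-value according to \Cref{eq:q_val},
\begin{equation*}
    Q(\mathcal{G}, v_l) = \bra{\mathcal{E}_A, \boldsymbol \alpha, \boldsymbol \beta, \boldsymbol \gamma}_p O_{v_l} \ket{\mathcal{E}_A, \boldsymbol \alpha, \boldsymbol \beta, \boldsymbol \gamma}_p,
\end{equation*}
and apply \Cref{thm:ansatz_perm_equivariance} to both the ket and the bra to rewrite the state in terms of the permuted graph $\mathcal{G}_\sigma$ with adjacency matrix $P_\sigma^T A P_\sigma$ and permuted node-feature vector $P_\sigma^T \boldsymbol \alpha$. Inserting $\tilde{P}_\sigma^\dagger \tilde{P}_\sigma = \mathbb{I}$ around the observable yields
\begin{equation*}
    Q(\mathcal{G}, v_l) = \bra{\mathcal{E}_{P_\sigma^T A P_\sigma}, P_\sigma^T \boldsymbol\alpha, \boldsymbol \beta, \boldsymbol \gamma}_p \bigl(\tilde{P}_\sigma^\dagger O_{v_l} \tilde{P}_\sigma\bigr) \ket{\mathcal{E}_{P_\sigma^T A P_\sigma}, P_\sigma^T \boldsymbol\alpha, \boldsymbol \beta, \boldsymbol \gamma}_p.
\end{equation*}

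Second, I would compute the conjugated observable. Since $\tilde{P}_\sigma$ implements the site relabelling $i \mapsto \sigma^{-1}(i)$ on tensor factors, it maps single-qubit Paulis as $\tilde{P}_\sigma^\dagger Z_i \tilde{P}_\sigma = Z_{\sigma^{-1}(i)}$, and hence
\begin{equation*}
    \tilde{P}_\sigma^\dagger \bigl(Z_{v_{t-1}} Z_{v_l}\bigr) \tilde{P}_\sigma = Z_{\sigma^{-1}(v_{t-1})} Z_{\sigma^{-1}(v_l)}.
\end{equation*}
The scalar prefactor $\varepsilon_{v_{t-1}, v_l}$ is the entry of $A$ at position $(v_{t-1}, v_l)$, which is, by definition of $P_\sigma^T A P_\sigma$, exactly the entry of the permuted adjacency matrix at positions $(\sigma^{-1}(v_{t-1}), \sigma^{-1}(v_l))$. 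Consequently $\tilde{P}_\sigma^\dagger O_{v_l} \tilde{P}_\sigma$ is precisely the observable $O_{v_{\sigma^{-1}(l)}}$ defined with respect to the permuted graph $\mathcal{G}_\sigma$ and the permuted ``last node'' index $\sigma^{-1}(v_{t-1})$.

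Third, combining the two ingredients gives $Q(\mathcal{G}, v_l) = Q(\mathcal{G}_\sigma, v_{\sigma^{-1}(l)})$. Collecting this identity for all $l = 1, \ldots, n$ is exactly the equivariance relation $Q(\mathcal{G}) = \sigma_Q Q(\mathcal{G}_\sigma)$. The main obstacle I anticipate is purely notational: one must be careful that the labelling convention used for $\tilde{P}_\sigma$ in Theorem~\ref{thm:ansatz_perm_equivariance} (which acts on tensor factors) is consistent with the labelling of vertices, edges, and the ``previous node'' $v_{t-1}$ appearing in $O_{v_l}$, so that the permuted weight $\varepsilon_{v_{t-1}, v_l}$ and the permuted Pauli support track each other correctly. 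Once that bookkeeping is fixed, the proof is essentially a direct consequence of Theorem~\ref{thm:ansatz_perm_equivariance} and the covariance of Pauli strings under qubit permutations.
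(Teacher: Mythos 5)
Your proposal is correct and follows essentially the same route as the paper's proof: reduce everything to Theorem~\ref{thm:ansatz_perm_equivariance} and observe that the observable $O_{v_l}$, including its edge-weight prefactor, transforms covariantly under the vertex permutation. You simply carry out the bookkeeping (conjugation by $\tilde{P}_\sigma$, tracking $\varepsilon_{v_{t-1},v_l}$ through $P_\sigma^T A P_\sigma$) more explicitly than the paper does.
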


\begin{proof}
We know from \Cref{thm:ansatz_perm_equivariance} that the ansatz we use, and therefore the expectation values $\langle O_{v_l} \rangle$, are permutation equivariant. The Q-values are defined as $Q(\mathcal{G}, v_l) = \varepsilon_{ij} \langle O_{v_l} \rangle$ (see \Cref{eq:q_val}) and therefore additionally depend on the edge weights of the graph $\mathcal{G}$. The edge weights are computed according to the graph's adjacency matrix, and re-ordered under a permutation of the vertices and assigned to their corresponding permuted expectation values.
\end{proof}

As a second step, to show that all components of our algorithm are permutation equivariant, it remains to show that the tours that our model produces as described in \Cref{sec:qgnn_architecture} are also permutation equivariant.

\begin{corollary}[Equivariance of tours]
Let $T(\mathcal{G}, \boldsymbol \beta, \boldsymbol \gamma, v_0)$ be a tour generated by a permutation equivariant agent implemented with a PQC as defined in \Cref{eq:ansatz} and Q-values as defined in \Cref{eq:q_val}, for a fixed set of parameters $\boldsymbol \beta, \boldsymbol \gamma$ and a given start node $v_0$, where a tour is a cycle over all vertices $v_l \in \mathcal{V}$ that contains each vertex exactly once. Let $\sigma$ be a permutation of the vertices $\mathcal{V}$, and $\sigma_T$ a permutation that reorders the vertices in the tour accordingly. Then the output tour is permutation equivariant,
\begin{equation}
    T(\mathcal{G}, \boldsymbol \beta, \boldsymbol \gamma, v_0) = \sigma_T T(\mathcal{G}_{\sigma}, \boldsymbol \beta, \boldsymbol \gamma, v_{\sigma(0)}).
\end{equation}
\end{corollary}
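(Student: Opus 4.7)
The plan is to prove the corollary by induction on the step $t$ of the node selection process, using \Cref{thm:equi_q_values} at each step to show that the node chosen at step $t$ in the original graph corresponds, under $\sigma$, to the node chosen at step $t$ in the permuted graph. Since a tour is nothing more than the ordered list of nodes produced by $n-2$ applications of the $\operatorname{argmax}$ rule followed by appending the single remaining node, establishing this stepwise correspondence immediately yields the stated relation $T(\mathcal{G}, \boldsymbol \beta, \boldsymbol \gamma, v_0) = \sigma_T T(\mathcal{G}_{\sigma}, \boldsymbol \beta, \boldsymbol \gamma, v_{\sigma(0)})$.

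The base case is immediate: at $t=0$ the partial tour consists only of the starting vertex, which is $v_0$ in one problem and $v_{\sigma(0)}$ in the other by assumption; these two match under $\sigma_T$. For the inductive step, suppose the partial tours $T_t$ and $T_t^\sigma$ agree under $\sigma_T$ after $t$ selections. Then the annotation vectors $\boldsymbol \alpha^{(t)}$ and $\boldsymbol \alpha^{(t), \sigma}$, which record exactly which vertices lie in the current partial tour, are related by $\boldsymbol \alpha^{(t), \sigma} = P_{\sigma}^{T} \boldsymbol \alpha^{(t)}$, and the available-vertex sets $\mathcal{V}_a$ and $\mathcal{V}_a^{\sigma}$ are likewise in bijection via $\sigma$. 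By \Cref{thm:equi_q_values} applied to the annotated graph at step $t$, the full vector of Q-values satisfies $Q(\mathcal{G}) = \sigma_Q Q(\mathcal{G}_\sigma)$, and the manual override that sets $Q = -10000$ on $\mathcal{V}_u$ respects this, since the unavailable sets are themselves permuted by $\sigma$. Therefore the $\operatorname{argmax}$ over available vertices in the permuted problem returns precisely $\sigma^{-1}$ applied to the $\operatorname{argmax}$ of the original problem, so the newly appended vertex matches under $\sigma_T$, closing the induction.

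Finally, after $n-2$ selections both processes have uniquely determined which single vertex remains to be appended; by the induction these remaining vertices correspond under $\sigma$, so the completed tours $T_n$ and $T_n^{\sigma}$ match entry-by-entry after applying $\sigma_T$, which is the claim. The main subtlety — and the only place where something beyond a mechanical invocation of \Cref{thm:equi_q_values} is needed — is verifying that the dynamically updated annotation $\boldsymbol \alpha^{(t)}$ and the associated restriction of the $\operatorname{argmax}$ to $\mathcal{V}_a$ transform covariantly under $\sigma$; this is what makes the induction carry through, and is handled by the observation that the update rule "set $\alpha_i = 0$ whenever $v_i$ enters the partial tour" is itself defined purely in terms of the vertex labels and hence commutes with relabelings by $\sigma$. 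No tie-breaking issues arise because the $\operatorname{argmax}$ is applied to a permutation of the same multiset of real numbers, so if it is well-defined in one problem it is well-defined in the other and the chosen indices match under $\sigma$.
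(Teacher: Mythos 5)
Your proof is correct and follows essentially the same route as the paper's: the paper also argues step by step through the node selection process, invoking the equivariance of the Q-values (Theorem 2) at each step to conclude that the selected vertices correspond under the permutation. Your version simply makes explicit, via induction, the covariant transformation of the annotation vector $\boldsymbol\alpha^{(t)}$ and the available-node sets, details the paper's proof leaves implicit.
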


\begin{proof}
We have shown in \Cref{thm:equi_q_values} that the Q-values of our model are permutation equivariant, meaning that a permutation of vertices results in a reordering of Q-values to different indices. Action selection is done by $v_{t+1} = \mathrm{argmax}_v Q(\mathcal{G}_i^{(t)}, v)$, and the node at the index corresponding to the largest Q-value is chosen. To generate a tour, the agent starts at a given node $v_0$ and sequentially selects the following $n-1$ vertices. Upon a permutation of the input graph, the tour now starts at another node index $v_{\sigma(0)}$. Each step in the selection process can now be seen w.r.t.\ the original graph $\mathcal{G}$ and the permuted graph $\mathcal{G}_{\sigma}$. As we have shown in \Cref{thm:ansatz_perm_equivariance}, equivariance of the model holds for arbitrary input graphs, so in particular it holds for each $\mathcal{G}$ and $\mathcal{G}_{\sigma}$ in the action selection process, and the output tour under the permuted graph is equal to the output tour under the original graph up to a renaming of the vertices.
\end{proof}

\subsection{Analysis of expressivity}\label{sec:opt_containing_proof}

In this section, we analyze under which conditions there exists a setting of $\beta, \gamma$ for a given graph instance $\mathcal{G}_i$ for our ansatz at depth one that can produce the optimal tour for this instance. Note that this does not show anything about constructing the optimal tour for a number of instances simultaneously with this set of parameters, or how easy it is to find any of these sets of parameters. Those questions are beyond the scope of this work. The capability to produce optimal tours at any depth for individual instances is of interest because first, we do not expect that the model can find a set of parameters that is close-to-optimal for a large number of instances if it is not expressive enough to contain a parameter setting that is optimal for individual instances. Second, the goal of a ML model is always to find similarities within the training data that can be used to generalize well on the given learning task, so the ability to find optimal solutions on individual instances is beneficial for the goal of generalizing on a larger set of instances. Additionally, how well the model generalizes also depends on the specific instances and the parameter optimization routine, and therefore it is hard to make formal statements about the general case where we find one universal set of parameters that produces the optimal solution for arbitrary sets of instances.

For our model at $p=1$, we can compute the analytic form of the expectation values of our circuit as defined in \Cref{eq:observables} and \Cref{eq:q_val} as the following, by a similar derivation as in \cite{ozaeta2022expectation},
\begin{multline}\label{eq:anal_expectation}
    \langle O_{v_l} \rangle = \varepsilon_{v_{t-1}, v_l} \cdot \mathrm{sin}(\beta \pi) ~\mathrm{sin}(\varepsilon_{v_{t-1}, v_l} \gamma)\\
    \cdot \prod_{\underset{k \neq v_{t-1}}{(v_l,k) \in \mathcal{E}}} \mathrm{cos}(\varepsilon_{v_l, k} \gamma),
\end{multline}
where $v_{t-1}$ is the last node in the partial tour and $v_l$ is the candidate node. Note that due to the specific setup of node features used in our work where the contributions of nodes already present in the tour are turned off, these expectation values are simpler than those given for Ising-type Hamiltonians in \cite{ozaeta2022expectation}. For a learning task where contributions of all nodes are present in every step, the expectation values of the EQC will be the same as those for Ising Hamiltonians without local fields given in \cite{ozaeta2022expectation}, with the additional node features $\boldsymbol \alpha$.  Due to this structural similarity to the ansatz used in the QAOA, results on the hardness to give an analytic form of these expectation values at $p > 1$ also transfer to our model. Even at depth $p=2$ analytic expressions can only be given for certain types of graphs \cite{marwaha2021local,szegedy2019qaoa}, and everything beyond this quickly becomes too complex. For this reason, we can only make statements for $p=1$ in this work. 

In order to generate an arbitrary tour of our choice, in particular also the optimal tour, it suffices to guarantee that for a suitable choice of (fixed) $\gamma$, at each step in the node selection process the edge we want to add next to the partial tour has highest expectation. One way we can do this is by controlling the signs of each sine and cosine term in \Cref{eq:anal_expectation} such that only the expectation values corresponding to edges that we want to select are positive, and all others are negative.

To understand whether this is possible, we can leverage known results about the expressivity of the sine function. For any rationally independent set of $\{x_1, ..., x_n\}$ with labels $y_i \in (-1, 1)$, the sine function can approximate these points to arbitrary precision $\epsilon$ as shown in \cite{harman2015sin}, i.e., there exists an $\omega$ s.t.\
\begin{equation}
    |\mathrm{sin}(\omega x_i) - y_i| < \epsilon ~\mathrm{for}~ i=1, \dots, n.
\end{equation}

In general, the edge weights of graphs that represent TSP instances are not rationally independent. (The real numbers $x_1, \dots, x_n$ are said to be rationally independent if no integers $k_1, \dots, k_n$ exist such that $x_1 k_1 + \dots + x_n k_n = 0$, besides the trivial solution $k_i = 0 ~\forall~ k$. Rational independence also implies the points are not rational numbers, so they are also not numbers normally represented by a computer.) However, in principle they can easily be made rationally independent by adding a finite perturbation $\epsilon'_i$ to each edge weight. The results in \cite{harman2015sin} imply that almost any set of points $x_1, \dots, x_n$ with $0 < x_i < 1$ is rationally independent, so we can choose $\epsilon'_i$ to be drawn uniformly at random from $(0, \epsilon_{\mathrm{max}}]$. As long as these perturbations are applied to the edge weights in a way that does not change the optimal tour, as could be done by ensuring that $\epsilon_{\mathrm{max}}$ is small enough so that the proportions between edge weights are preserved, we can use this perturbed version of the graph to infer the optimal tour. (Such an $\epsilon_{\mathrm{max}}$ can be computed efficiently.) In this way we can guarantee that the ansatz at depth one can produce arbitrary labelings of our edges, which in turn let us produce expectation values such that only the ones that correspond to edges in the tour of our choice will have positive values. We note that in the analysis we assume real-valued (irrational) perturbations, which of course cannot be represented in the computer. However, by using the results of \cite{harman2015sin} and approximating $\pm 1$ within a small epsilon, we can get a robust statement where finite precision suffices. For completeness, we provide a proof for this case in the supplementary material. However, we point out that the parameter $\omega$ that leads to the construction of the optimal tour can in principle be arbitrarily large and hard to find.  We do not go deeper into this discussion since in fact we do not want to rely on this proof of optimality as a guiding explanation of how the algorithm works.

The reason for this is that in some way, this proof of optimality works despite the presence of the TSP graph and not because of it. This is similar in vein to universality results for QAOA-type circuits, where it can be shown that for very specific types of Hamiltonians, alternating applications of the cost and mixer Hamiltonian leads to quantum computationally universal dynamics, i.e.\ it can reach all unitaries to arbitrary precision \cite{lloyd2018quantum,morales2020universality}, but these Hamiltonians are not related to any of the combinatorial optimization problems that were studied in the context of the QAOA. While these results provide valuable insight into the expressivity of the models, in our case they do not inform us about the possibility of a quantum advantage on the learning problem that we study in this work. In particular, we do not know from these results whether the EQC utilizes the information provided by the graph features in a way in which the algorithm benefits from the quantumness of the model, at depth one or otherwise. As it is known that the QAOA applied to ground state finding benefits from interference effects, investigating whether similar results hold for our algorithm is an interesting question that we leave for future work.

Additionally, we note that high expressivity alone does not necessarily lead to a good model, and may even lead to issues in training as the well-studied phenomenon of barren plateaus \cite{mcclean2018barren}, or a susceptibility to overfitting on the training data. In practice, the best models are those that strike a balance between being expressive enough, and also restricting the search space of the model in a way that suits the given training data. Studying and designing models that have this balance is exactly the goal of geometric learning, and the equivariance we have proven for our model is a helpful geometric prior for learning tasks on graphs.

\subsection{Numerical results}\label{sec:num_results}

\begin{figure*}
  \subfloat{\includegraphics[scale=0.66]{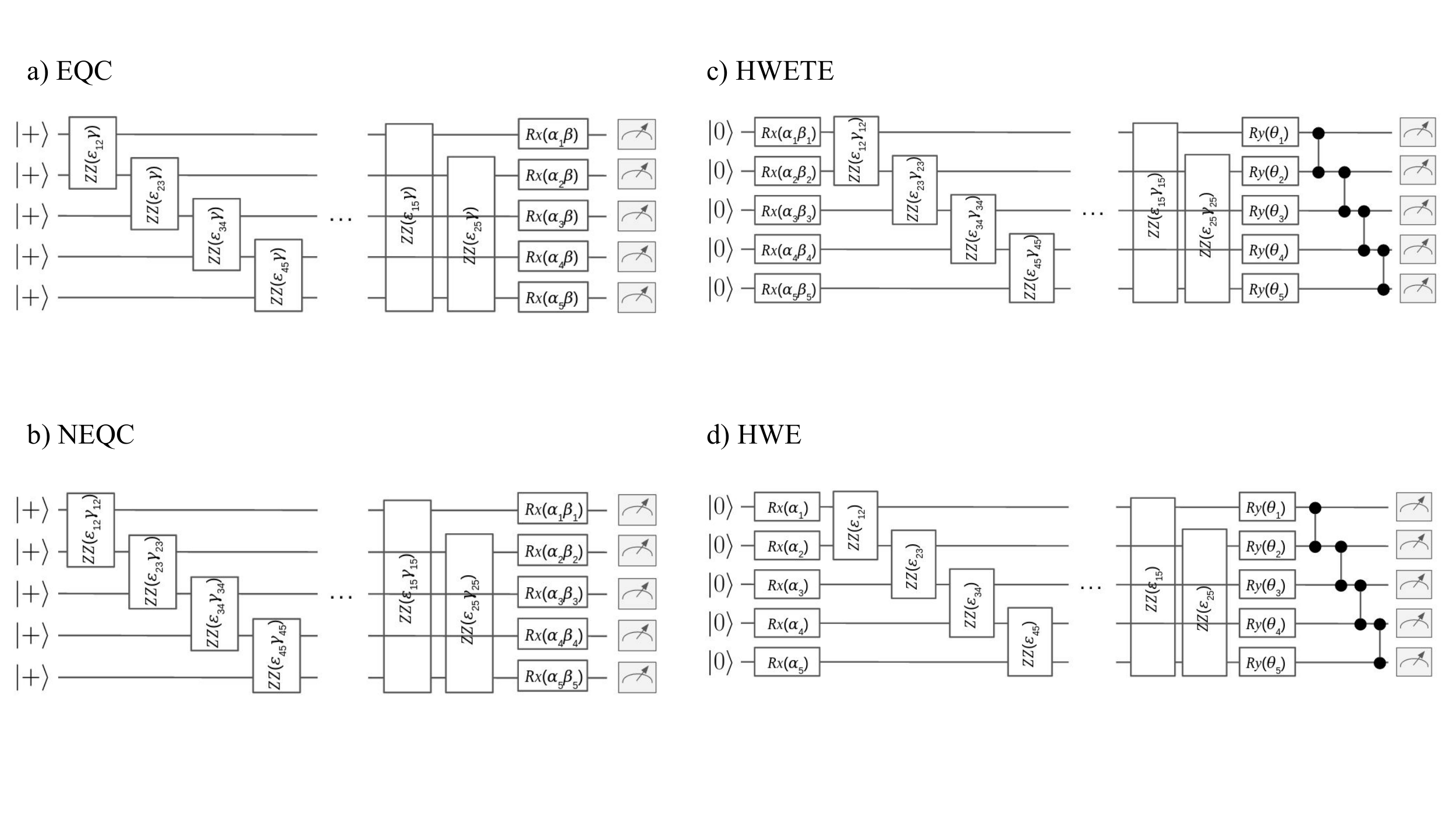}}
  \vspace*{-0.4in}
 \caption{One layer of each of the circuits studied in this work. a) The EQC with two trainable parameters $\beta, \gamma$ per layer. b) The same set of gates as in the EQC, but we break equivariance by introducing one individual free parameter per gate (denoted NEQC). c) Similar to the NEQC, but we start from the all-zero state and add a final layer of trainable one-qubit gates and a ladder of CZ-gates (denoted hardware-efficient with trainable encoding, HWETE). d) Same as the HWETE, but only the single-qubit Y-rotation parameters are trained (denoted HWE).}
\label{fig:circuits}
\end{figure*}

\begin{figure*}
  \subfloat{\includegraphics[scale=0.5]{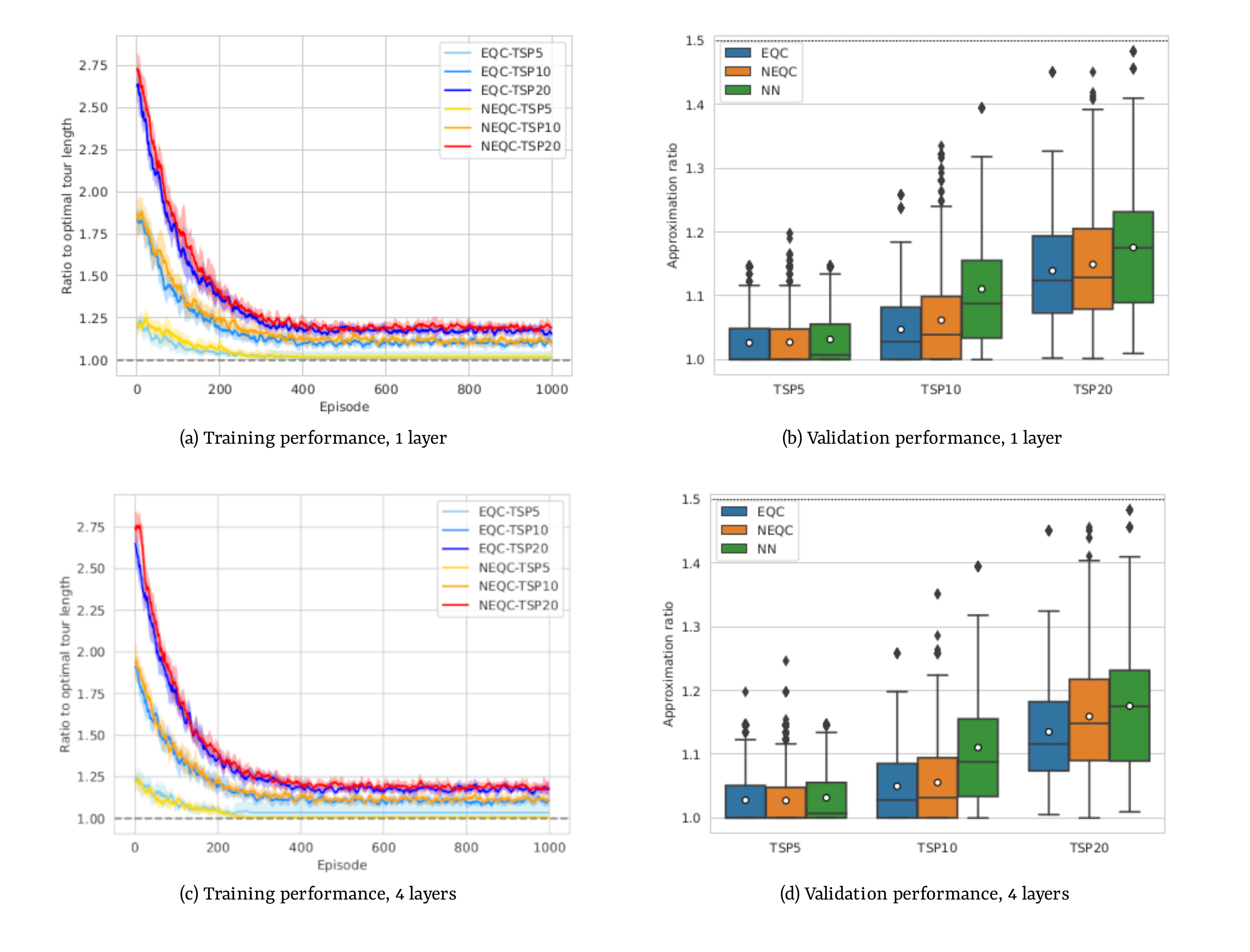}}
  \vspace*{-0.2in}
 \caption{Comparison between the EQC and its non-equivariant version (NEQC). Comparison is done in terms of approximation ratio (lower is better) of ten trained models on a set of 100 previously unseen TSP instances for each instance size. Shaded areas in left-hand side figures show standard error of the mean. The boxes on the right-hand side show the upper quartile, median and lower quartile for each model configuration, the whiskers of the boxes extend to 1.5 times the interquartile range, and the black diamonds denote outliers. We additionally show the means of each box as white circles.  In the NEQC each gate is parametrized separately but the ansatz structure is otherwise identical to the EQC, as described in \Cref{sec:num_results}. Results are shown on TSP instances with 5, 10 and 20 cities (TSP5, TSP10 and TSP20, respectively). To provide a classical baseline, we also show results for the nearest-neighbor heuristic (NN). a) and b) show the training and validation performance for both ansatzes with one layer, while c) and d) show the same for four layers. The dashed, grey line on the left-hand side figures denotes optimal performance. The dotted, black line on the right-hand side figures denotes the upper bound of the Christofides algorithm, a popular classical approximation algorithm that is guaranteed to find a solution that is at most 1.5 times as long as the optimal tour. Figures a) and c) show the running average over the last ten episodes.}
\label{fig:comparison_eq_neq_nn}
\end{figure*}

After proving that our model is equivariant under node permutations and analytically studying the expressivity of our ansatz, we now numerically study the training and validation performance of this model on TSP instances of varying size in a NCO context. The training data set that we use is taken from \cite{vinyals2015pointer}, where the authors propose a novel classical attention approach and evaluate it on a number of geometric learning tasks. We note that we have re-computed the optimal tours for all instances that we use, as the data set uploaded by the authors of \cite{vinyals2015pointer} erroneously contains sub-optimal solutions. (This was confirmed with the authors, but at the time of writing of this work their repository has not been updated with the correct solutions.) To compute optimal solutions for the TSP instances with 10 and 20 cities we used the library Python-TSP \cite{py_tsp}.

We evaluate the performance of the EQC on TSP instances with 5, 10 and 20 cities (corresponding to 5, 10, and 20 qubits, respectively). As described in \Cref{sec:qgnn_architecture}, the environment is considered as solved by an agent when the running average of the approximation ratio over the past 100 episodes is less than 1.05. Otherwise, each agent will run until it reaches the maximum number of episodes, that we set to be 5000 for all agents. Note that this is merely a convenience to shorten the overall training times, as we have access to the optimal solutions of our training instances. In a realistic scenario where one does not have access to optimal solutions, the algorithm would simply run for a fixed number of episodes or until another convergence criterion is met. When evaluating the final average approximation ratios, we always use the parameter setting that was stored in the final episode, regardless of the final training error. When variations in training lead to a slightly worse performance than what was achieved before, we still use the final parameter setting. We do this because as noted above, in a realistic scenario one does not have knowledge about the ratio to the optimal solutions during training. Unless otherwise stated, all models are trained on 100 training instances and evaluated on 100 validation instances.

\begin{figure*}
  \subfloat{\includegraphics[scale=0.5]{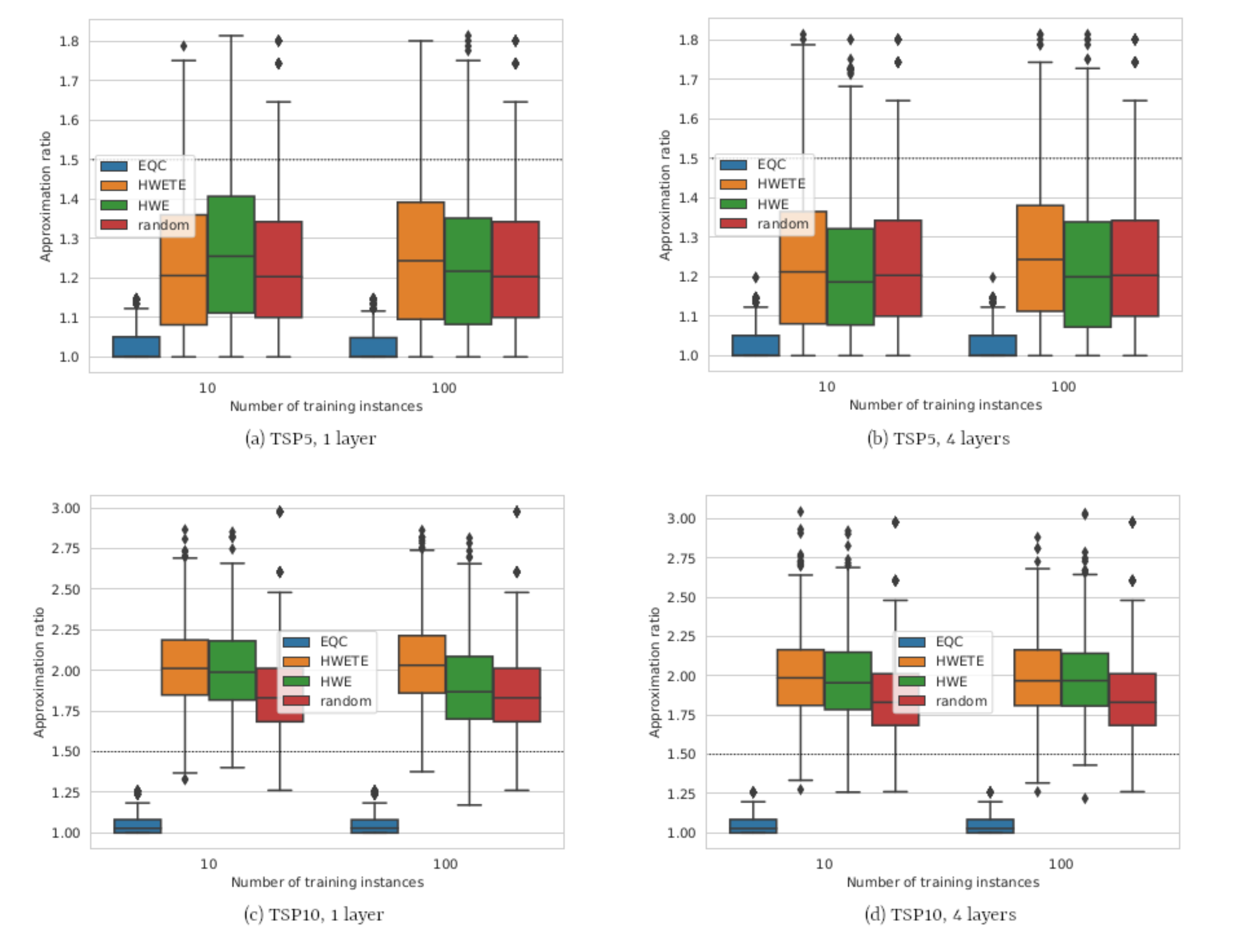}}
  \vspace*{-0.2in}
 \caption{Comparison between EQC and two hardware-efficient ansatzes where we gradually break the equivariance of the original ansatz. We show results for TSP instances with five (TSP5) and ten (TSP10) cities in a) and b), and c) and d), respectively. Models were trained on 10 and 100 instances, and with one and four layers. Each box is computed over results for ten agents. The hardware-efficient ansatz with trainable embedding (HWETE) consists of trainable graph encoding layers as those in the EQC, with an additional variational part in each layer that consists of parametrized single-qubit Y-gates and a ladder of CZ-gates. The HWE ansatz is the same as the HWETE, but where the graph-embedding part is not trainable and only the Y-gates in each layer are trained. We also show approximation ratios of a random algorithm, where a random tour is picked as the solution to each instance. The dotted, black lines denote the upper bound of the Christofides algorithm. We see that the HWE ansatzes perform extremely badly and barely outperform picking random tours only in some cases.}
\label{fig:comparison_hwe}
\end{figure*}

As we are interested in the performance benefits that we gain by using an ansatz that respects an important graph symmetry, we compare our model to versions of the same ansatz where we gradually break the equivariance property. We start with the simplest case, were the circuit structure is still the same as for the EQC, but instead of having one $\beta_l, \gamma_l$ in each layer, every X- and ZZ-gate is individually parametrized. As these parameters are now tied directly to certain one- and two-qubits gates, e.g.\ an edge between qubits one and two, they will not change location upon a graph permutation and therefore break equivariance. We call this the non-equivariant quantum circuit (NEQC). To go one step further, we take the NEQC and add a variational part to each layer that is completely unrelated to the graph structure: namely a hardware-efficient layer that consists of parametrized Y-rotations and a ladder of CZ-gates. In this ansatz, we have a division between a data encoding part and a variational part, as is often done in QML. To be closer to standard types of ansatzes often used in QML, we also omit the initial layer of H-gates here and start from the all-zero state, which requires us to switch the order of X- and ZZ-gates (however, in practice it did not make a difference whether we started from the all-zero or uniform superposition state in the learning task that we study). We denote this the hardware-efficient with trainable embedding (HWETE) ansatz. Finally, we study a third ansatz, where we take the HWETE and now only train the Y-rotation gates, and the graph-embedding part of the circuit only serves as a data encoding step. We call this simply the hardware-efficient (HWE) ansatz. A depiction of all ansatzes can be seen in \Cref{fig:circuits}.

We start by comparing the EQC to the NEQC on TSP instances with 5, 10 and 20 cities. We show the training and validation results in \Cref{fig:comparison_eq_neq_nn}. To evaluate the performance of the models that we study, we compute the ratio to the optimal tour length as shown in \Cref{eq:approx_ratio}, as the instances that we can simulate the circuits for are small enough to allow computing optimal tours for. For reference, the authors of \cite{vinyals2015pointer}, who generated the training instances that we use, stop comparing to optimal solutions at $n=20$ as it becomes extremely costly to find optimal tours from thereon out. To provide an additional classical baseline, we also show results for the nearest-neighbor heuristic. This heuristic starts at a random node and selects the closest neighboring node in each step to generate the final tour. The nearest-neighbor algorithm finds a solution quickly also for instances with increasing size, but there is no guarantee that this tour is close to the optimal one. However, as we know the optimal tours for all instances, the nearest-neighbor heuristic provides an easy to understand classical baseline that we can use. Additionally, we add the upper bound given by one of the most widely used approximation algorithms for the TSP (as implemented e.g.\ in Google OR-Tools): the Christofides algorithm. This algorithm is guaranteed to find a tour that is at most 1.5 times as long as the optimal tour \cite{christofides1976worst}. In the case where any of our models produces validation results that are on average above this upper bound of the Christofides algorithm, we consider it failed, as it is more efficient to use a polynomial approximation algorithm for these instances. However, we stress that this upper bound can only serve to inform us about the failure of our algorithms and not their success, as in practice the Christofides algorithm often achieves much better results than those given by the upper bound. We also note that both the Christofides and nearest-neighbor algorithms are provided here to assure that our algorithm produces reasonable results, and not to show that our algorithm outperforms classical methods as this is not the topic of the present manuscript.  The bound is shown as a dotted black line in \Cref{fig:comparison_eq_neq_nn} and \Cref{fig:comparison_hwe}.

Geometric learning models are expected to be more data-efficient than their unstructured counterparts, as they respect certain symmetries in the training data. This means that when a number of symmetric instances are present in the training or validation data, the effective size of these data sets is decreased. This usually translates into models that are more resource-efficient in training, e.g.\, by requiring fewer parameters or fewer training samples. In our comparison of the EQC and the NEQC, we fix the number of training samples and compare the different models in terms of circuit depth and number of parameters to achieve a certain validation error and expect that the EQC will need fewer layers to achieve the same validation performance as the NEQC. This comparison can be seen in \Cref{fig:comparison_eq_neq_nn}. In \Cref{fig:comparison_eq_neq_nn} a) and b), we show the training and validation performance of both ansatzes at depth one. For instances with five cities, both ansatzes perform almost identically on the validation set, where the NEQC performs worse on a few validation instances. As the instance size increases, the gap between EQC and NEQC becomes bigger. We see that even though the two ansatzes are structurally identical, the specific type of parametrizations we choose and the properties of both ansatzes that result from this make a noticeable difference in performance. While the EQC at depth one has only two parameters per layer regardless of instance size, the NEQC's number of parameters per layer depends on the number of nodes and edges in the graph. Despite having much fewer parameters, the EQC still outperforms the NEQC on instances of all sizes. Increasing the depth of the circuits also does not change this. In \Cref{fig:comparison_eq_neq_nn} c) and d) we see that at a depth of four, the EQC still beats the NEQC. The latter's validation performance even slightly decreases with more layers, which is likely due to the increased complexity of the optimization task, as the number of trainable parameters per layer is $\frac{(n-1)n}{2} + n$, which for the 20-city instances means 840 trainable parameters at depth four (compared to 8 parameters in case of the EQC). This shows that at a fraction of the number of trainable parameters, the EQC is competitive with its non-equivariant counterpart even though the underlying structure of both circuits is identical. Compared to the classical nearest-neighbor heuristic, both ansatzes perform well and beat it at all instance sizes, and both ansatzes are also below the approximation ratio upper bound given by the Christofides algorithm on all validation instances.  The box plots in \Cref{fig:comparison_eq_neq_nn} show a comparison of the EQC and NEQC in terms of the quartiles of the approximation ratios on the validation set. As it is hard to infer statistical significance of results directly from the box plots, especially when the distributions of data points are not very far apart, we additionally plot the means of the distributions and their standard error, and compute p-values based on a t-test to give more insight on the comparison of these two models in the supplementary material. To show statistical significance of the comparison of the EQC and NEQC, we perform a two-sample t-test with the null-hypothesis that the averages of the two distributions are the same, as is common in statistical analysis, and compute p-values based on this. The p-values confirm that there is indeed a statistical significance in the comparison between models for the 10- and 20-city instances, and that we can be more certain about the significance as we scale up the instance size. The average approximation ratios in case of the 5-city instances are roughly the same, as we can expect due to the fact that there exist only 12 permutations of the TSP graphs of this size. However, even for these small instances the EQC achieves the same result with fewer parameters, namely 2 per layer instead of the 15 per layer required in the NEQC. 

Next, we compare the EQC to ansatzes in which we introduce additional variational components that are completely unrelated to the training data structure, as described above. We show results for the HWETE and the HWE ansatz in \Cref{fig:comparison_hwe}. To our own surprise, we did not manage to get satisfactory results with either of those two ansatzes, especially at larger instances, despite an intensive hyperparameter search. Even the HWETE, which is basically identical to the NEQC with additional trainable parameters in each layer, failed to show any significant performance. To gauge how badly those two ansatzes perform, we also show results for an algorithm that selects a random tour for each validation instance in \Cref{fig:comparison_hwe}. In this figure, we show results for TSP instances with five and ten cities for both ansatzes with one and four layers, respectively. Additionally, we show how the validation performance changes when the models are trained with either a training data set consisting of 10 or 100 instances, in the hopes of seeing improved performance as the size of the training set increases. We see that in neither configuration, the HWETE or HWE outperform the Christofides upper bound on all validation instances. Additionally, in almost all cases those two ansatzes do not even outperform the random algorithm. This example shows that in a complex learning scenario, where the number of permutations of each input instance grows combinatorially with instance size and the number of states in the RL environment grows exponentially with the number of instances, a simple hardware-efficient ansatz will fail even when the data encoding part of the PQC is motivated by the problem data structure. While increasing the size of the training set and/or the number of layers in the circuit seems to provide small advantages in some cases, it also leads to a decrease in performance in others. On the other hand, the EQC is mostly agnostic to changes in the number of layers or the training data size. Overall, we see that the closer the ansatz is to an equivariant configuration, the better it performs, and picking ansatzes that respect symmetries inherent to the problem at hand is the key to success in this graph-based learning task.

In \Cref{sec:ansatz_structure_equivariance} we have also pointed out that the EQC is structurally related to the ansatz used in the QAOA. The main difference in solving instances of the TSP with the NCO approach used in our work and solving it with the QAOA lies in the way in which the problem is encoded in the ansatz, and in the quantity that is used to compute the objective function value for parameter optimization. We give a detailed description of how the TSP is formulated in terms of a problem Hamiltonian suitable for the QAOA and how parameters are optimized in \Cref{sec:tsp_qaoa}. As the QAOA is arguably the most explored variational quantum optimization algorithm at the time of writing, and due to the structural similarity between the EQC and the QAOA's ansatz, we also compare these two approaches on TSP instances with five cities. 

\begin{figure}[t]
\includegraphics[width=0.5\textwidth]{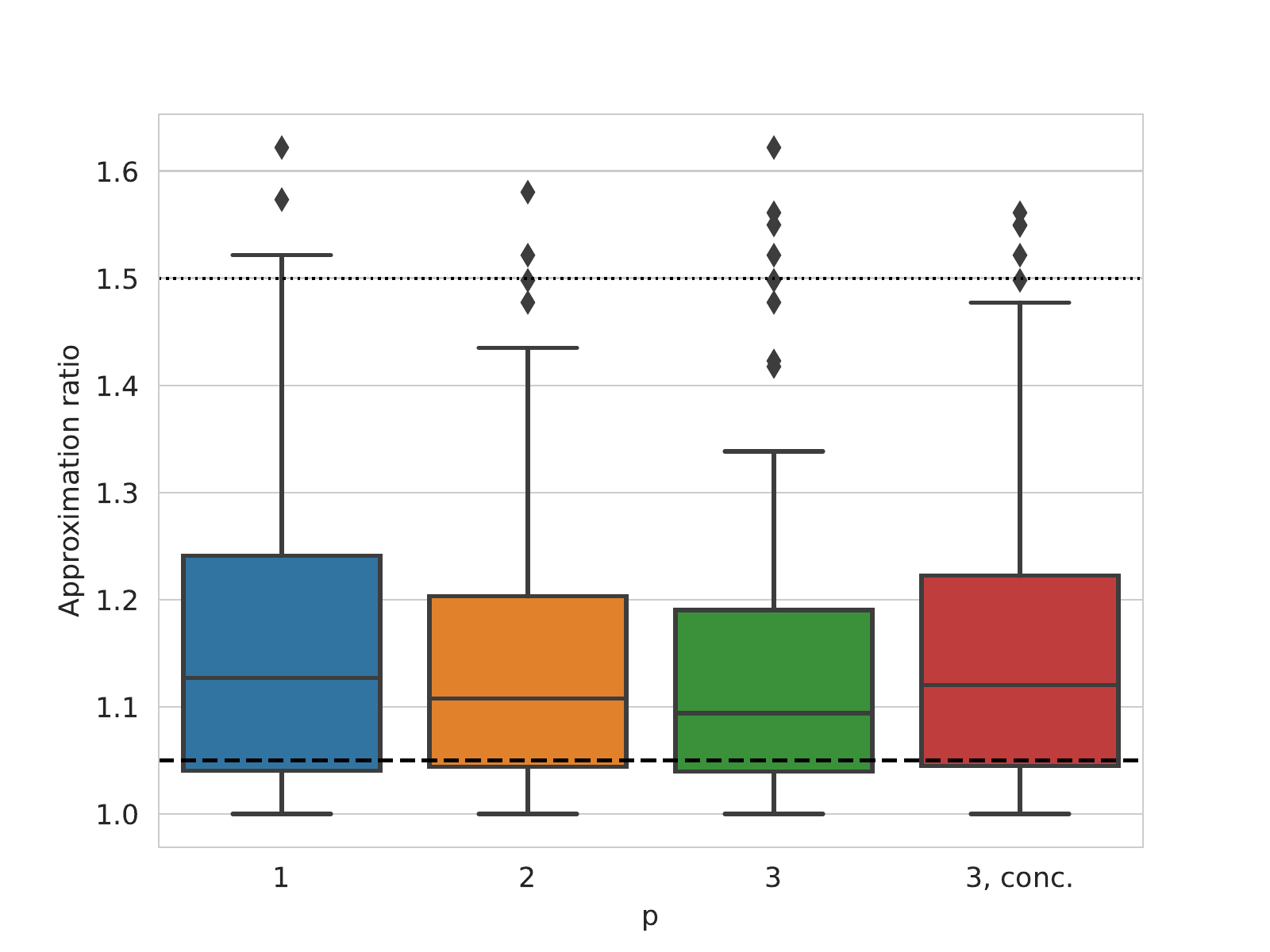}
\caption{Approximation ratio of QAOA on TSP instances with five cities up to depth three. The boxes show the upper quartile, median and lower quartile for each model configuration, the whiskers of the boxes extend to 1.5 times the interquartile range, and the black diamonds denote outliers. Dashed black line denotes average final performance of the EQC at depth one during the last 100 iterations of training on the same instances. Last box shows the results for the best parameters found for one instance at $p=3$ applied to all training instances, following a parameter concentration argument. The dotted, black line denotes the upper bound of the Christofides algorithm.}
\label{fig:tsp5_qaoa}
\end{figure}
We see in \Cref{fig:tsp5_qaoa} that already on these small instances, the QAOA requires significantly deep circuits to achieve good results, that may be out of reach in a noisy near-term setting. The EQC on the other hand i) uses a number of qubits that scales linearly with the number of nodes of the input graph as opposed to the $n^2$ number of variables required for QAOA, and ii) already shows good performance at depth one for instances with up to 20 cities. In addition to optimizing QAOA parameters for each instance individually, we also show results of applying one set of parameters that performed well on one instance at depth three, on other instances of the same problem following the parameter concentration argument given in \cite{brandao2018concentration} and described in more detail in \Cref{sec:tsp_qaoa}. While we find that parameters seem to transfer well to other instances of the same problem in case of the TSP, the overall performance of the QAOA is still much worse than that of the EQC.

\section{Discussion}

After providing analytic insight on the expressivity of our ansatz, we have numerically investigated the performance of our EQC model on TSP instances with 5, 10, and 20 cities (corresponding to 5, 10, 20 qubits respectively), and compared them to other types of ansatzes that do not respect any graph symmetries. To get a fair comparison, we designed PQCs that gradually break the equivariance property of the EQC and assessed their performance. We find that ansatzes that contain structures that are completely unrelated to the input data structure are extremely hard to train for this learning task where the size of the state space scales exponentially in the number of input nodes of the graph. Despite much effort and hyperparameter optimization, we did not manage to get satisfactory results with these ansatzes. The EQC on the other hand works almost out-of-the box, and achieves good generalization performance with minimal hyperparameter tuning and relatively few trainable parameters. We have also compared using the EQC in a neural combinatorial optimization scheme with the QAOA, and find that even on TSP instances with only five cities the NCO approach significantly outperforms the QAOA. In addition to training the QAOA parameters for every instance individually, we have also investigated the performance in light of known parameter concentration results that state that in some cases, parameters found on one instance perform well on average for other instances of the same problem. While this is true in the case of the TSP instances we investigate here as well, the overall performance is still worse than that of the EQC.

Comparing our algorithm to the QAOA is also interesting from a different perspective. In \Cref{sec:ansatz_structure_equivariance} we have seen that our ansatz can be regarded as a special case of a QAOA-type ansatz, where instead of encoding a problem Hamiltonian we encode a graph instance directly, and in case of the specific formulation of the TSP used in this work, include mixing terms only for a problem-dependent subset of qubits. This lets us derive an exact formulation of the expectation values of our model at depth one from those of the QAOA given in \cite{ozaeta2022expectation}. For the QAOA, it is known that in the limit of infinite depth, it can find the ground state of the problem Hamiltonian and therefore the optimal solution to a given combinatorial optimization problem \cite{farhi2014quantum}. Additionally, it has been shown that even at low depth, the probability distributions generated by QAOA-type circuits are hard to sample from classically \cite{farhi2016quantum}. These results give a clear motivation of why using a quantum model in these settings can provide a potential advantage. While our model is structurally almost identical to that of the QAOA, in our case the potential for advantage is less clear. We saw in \Cref{eq:anal_expectation} that at depth one, in each step the expectation value of each edge that we consider to be selected consists of i) a term corresponding to the edge between the last added node and the candidate node, and ii) all outgoing edges from the candidate node. So our model considers the one-step neighborhood of each candidate node at depth one. In the case of the TSP it is not clear whether this can provide a quantum advantage for the learning task as specified in \Cref{sec:qgnn_architecture}. In terms of QAOA, it was shown that in order to find optimal solutions, the algorithm has to ``see the whole graph" \cite{farhi2020quantum}, meaning that all edges in the graph contribute to the expectation values used to minimize the energy. To alleviate this strong requirement on depth, a recursive version of the QAOA (RQAOA) was introduced in \cite{bravyi2020obstacles}. It works by iteratively eliminating variables in the problem graph based on their correlation, and thereby gradually reducing the problem to a smaller instance that can be solved efficiently by a classical algorithm,  e.g.\ by brute-force search. The authors of \cite{bravyi2020obstacles} show that the depth-one RQAOA outperforms QAOA with constant depth $p$, and that RQAOA achieves an approximation ratio of one for a family of Ising Hamiltonians. 

The node selection process performed by our algorithm with the EQC used as the ansatz is similar to the variable elimination process in the RQAOA, where instead of merging edges, the mixer terms for nodes that have already been selected are turned off, therefore effectively turning expectation values of edges corresponding to unavailable nodes to zero. Furthermore, the specific setup of weighted $Z_iZ_j$-correlations (see \Cref{eq:observables}) that we measure to compute Q-values in our RL scheme to solve TSP instances are of the same form as those in the Hamiltonian for the weighted MaxCut problem, 
$$H_{\mathrm{MaxCut}} = -\frac{1}{2} \sum_{ij} w_{ij} (1 - Z_i Z_j).$$
The MaxCut problem and its weighted variant have been studied in depth in the context of the QAOA, and it has been shown that it performs well on certain instances of graphs for this task \cite{farhi2014quantum,zhou2020quantum,shaydulin2022parameter,brandao2018concentration}. While the TSP and weighted MaxCut are clearly very different problems, the similarity between our algorithm and the RQAOA raises the interesting question whether the mechanisms underlying the successful performance of both models in those two learning tasks are related. Based on this, one may ask the broader question of whether QAOA-type ansatzes implement a favorable bias for hybrid quantum-classical optimization algorithms on weighted graphs, like the RQAOA or the quantum NCO scheme in this work. Specifically, by relating the mechanism underlying the variable elimination procedure in RQAOA, which eliminates variables based on their largest (anti-)correlation in terms of $Z_iZ_j$ operators, to the node selection process in our algorithm that solves TSP instances, we can establish a connection between the EQC and known results for the (R)QAOA on weighted MaxCut. It is an interesting question whether results that establish a quantum advantage of the QAOA can be related to the EQC in a NCO context as we present here, and we leave this question for future work.

\section{Methods}
\label{sec:related_work}

\subsection{Geometric learning - quantum and classical}

Learning approaches that utilize geometric properties of a given problem have lead to major successes in the field of ML, such as AlphaFold for the complex task of protein folding \cite{jumper2021highly,tunyasuvunakool2021highly} and have become an increasingly popular research field over the past few years. Arguably the prime example of a successful geometric model is the convolutional NN (CNN), which has been developed at the end of the 20th century in an effort to enable efficient training of image recognition models \cite{lecun1995convolutional}. Since then, it has been shown that one of the main reasons that CNNs are so effective is that they are translation invariant: if an object in a given input image is shifted by some amount, the model will still ``recognize" it as the same object and thus effectively requires fewer training data \cite{bronstein2021geometric}. While CNNs are the standard architecture used for images, symmetry-preserving architectures have also been developed for time-series data in the form of recurrent NNs \cite{schmidt2019recurrent}, and for graph data with GNNs  \cite{wu2020comprehensive}. GNNs have seen a surge of interest in the classical machine learning community in the past decade \cite{wu2020comprehensive,zhou2020graph}. They are designed to process data that is presented in graph form, like social networks \cite{wu2020comprehensive}, molecules \cite{mansimov2019molecular}, images \cite{long2021graph} or instances of combinatorial optimization problems \cite{cappart2021combinatorial}.

The first attempt to implement a geometric learning model in the quantum realm was made with the quantum convolutional NN in \cite{cong2019quantum}, where the authors introduce a translation invariant architecture motivated by classical convolutional NNs. Approaches to translate the GNN formalism to QNNs were taken in \cite{verdon2019quantum}, where input graphs are represented in terms of a parametrized Hamiltonian, which is then used to prepare the ansatz of a quantum model called a quantum graph neural network (QGNN). While the approach in \cite{verdon2019quantum} yields promising results, this work does not take symmetries of the input graph into account. (However, in an independent work prepared at the time of writing this manuscript, one of the authors of \cite{verdon2019quantum} shows that one of their proposed ansatzes is permutation invariant \cite{larocca2022group}.) The authors of \cite{henry2021quantum} introduce the so-called quantum evolution kernel, where they devise a graph-based kernel for a quantum kernel method for graph classification. Again, their ansatz is based on alternating layers of Hamiltonians, where one Hamiltonian in each layer encodes the problem graph, while a second parametrized Hamiltonian is trained to solve a given problem. A proposal for a quantum graph convolutional NN was made in \cite{zheng2021quantum}, and the authors of \cite{chen2022novel} propose directly encoding the adjacency matrix of a graph into a unitary to build a quantum circuit for graph convolutions. While all of the above works introduce forms of structured QML models, none of them study their properties explicitly from a geometric learning perspective or relate their performance to unstructured ansatzes. 

The authors of \cite{mernyei2021equivariant} take the step to introduce an equivariant model family for graph data and generalize the QGNN picture to so-called equivariant quantum graph circuits (EQGCs). EQGCs are a very broad class of ansatzes that respect the connectivity of a given input graph. The authors of \cite{mernyei2021equivariant} also introduce a subclass of EQGCs called equivariant quantum Hamiltonian graph circuits (EH-QGCs), that includes the QGNNs by \cite{verdon2019quantum} as a special case. EH-QGCs are implemented in terms of a Hamiltonian that is constructed based on the input graph structure, and they are explicitly equivariant under permutation of vertices in the input graph. The framework that the authors of \cite{mernyei2021equivariant} propose can be seen as a generalization of the above proposals. Different from the above proposals, EQGCs use a post-measurement classical layer that performs the functionality of an aggregation function as those found in classical GNNs. In classical GNNs, the aggregation function in each layer is responsible for aggregating node and edge information in an equivariant or invariant manner. Popular aggregation functions are sums or products, as they trivially fulfill the equivariance property. In the case of EQGCs, there is no aggregation in the quantum circuit, and this step is offloaded to a classical layer that takes as input the measurements of the PQC. Additionally, the EQGC family is defined over unweighted graphs and only considers the adjacency matrix of the underlying input graph to determine the connectivity of the qubits. The authors of \cite{mernyei2021equivariant} also show that their EQGC outperforms a standard message passing neural network on a graph classification task, and thereby demonstrate a first separation of quantum and classical models on a graph-based learning task. 

During preparation of our final manuscript, a work on invariant quantum machine learning models was released by the authors of \cite{larocca2022group}. They prove for a number of selected learning tasks whether an invariant quantum machine learning model for specific types of symmetries exists. Their work focuses on group invariance, and leaves proposals for NISQ-friendly equivariant quantum models as an open question.

Our proposal is most closely related to EH-QGCs, but with a number of deviations. First, our model is defined on weighted graphs and can therefore be used for learning tasks that contain node as well as edge features. Second, the initial state of our model is always the uniform superposition, which allows each layer in the ansatz to perform graph feature aggregation via sums and products of node and edge features, as discussed in \Cref{sec:ansatz_structure_equivariance}. Third, we do not require a classical post-processing layer, so our EQC model is purely quantum. Additionally, in its simplest form as used in this work, the number of qubits in our model scales linearly with the number of nodes in the input graph, while the depth of each layer depends on the graph's connectivity, and therefore it provides one answer to the question of a NISQ-friendly equivariant quantum model posed by \cite{larocca2022group}.

\subsection{Neural combinatorial optimization with reinforcement learning}\label{sec:ncorl}

The idea behind NCO is to use a ML model to learn a heuristic for a given optimization problem based on data. When combined with RL, this data manifests in form of states of an environment, while the objective is defined in terms of a reward function, as we will now explain in more detail. In the RL paradigm, the model, referred to as an agent, interacts with a so-called environment. The environment is defined in terms of its state space $\mathcal{S}$ and action space $\mathcal{A}$, that can both either be discrete or continuous. The agent alters the state of the environment by performing an action $a \in \mathcal{A}$, whereafter it receives feedback from the environment in form of the following state $s' \in \mathcal{S}$, and a reward $r$ that depends on the quality of the chosen action, given the initial state $s$. Actions are chosen based on a policy $\pi(a|s)$, which is a probability distribution of actions $a$ given states $s$. The definition of the state and action spaces and the reward function depends on the given environment. In general, the goal of the agent is to learn a policy that maximizes the expected return $G$,

\begin{equation}
    G_t = \sum_{k=0}^{\infty} \gamma^k r_{t+k+1},
\end{equation}
where $\gamma \in [0, 1]$ is the discount factor that determines the importance of future rewards in the agent's decision. The above definition of the expected return is for the so-called infinite horizon, where the interaction with the environment can theoretically go on to infinity. In practice. we usually work in environments with a finite horizon, where the above sum runs only over a predefined number of indices. There are many different approaches to maximize the expected return, and we refer the interested reader to \cite{sutton2018reinforcement} for an in-depth introduction.

In this work, we focus on so-called Q-learning, where the expected return is maximized in terms of Q-values. The values $Q(s, a)$ for each $(s, a)$ pair also represent the expected return following a policy $\pi$, but now conditioned on an initial state $s_t$ and action $a_t$,

\begin{equation}
    Q(s, a) = \mathbb{E}_{\pi}[G_t|s=s_t, a=a_t].
\end{equation}
When the agent is implemented in form of a NN, the goal of the NN is to approximate the optimal Q-function $Q^*$. One popular method to use a NN as a function approximator for Q-learning is called the deep Q-network (DQN), and the resulting DQN algorithm \cite{mnih2015human}. In this algorithm, the NN is trained similarly as in the supervised case, but without a given set of labelled examples. Instead, the agent collects samples at training time by interacting with the environment. These samples are stored in a memory, out of which a batch of random samples is drawn for each parameter update step. Based on the agent's output, the label for a given $(s, a)$ pair from the memory is computed as follows,

\begin{equation}\label{eq:q_update_dqn}
     q = r_{t+1} + \gamma \cdot  \underset{a}{\max}~\hat{Q}_{\boldsymbol{\theta}}(s_{t+1}, a),
\end{equation}
and this label is then used to compute parameter updates.
The update is not computed with the output of the function approximator $Q$, but by a copy $\hat{Q}$ called the target network, which is updated with the current parameters of the function approximator at fixed intervals. The purpose of this target network is to stabilize training, and how often it is updated is a hyperparameter that depends on the environment. In our case, the function approximator and target network are implemented as PQCs, while the parameter optimization is perfomed via the classical DQN algorithm. For more detail on implementing the DQN algorithm with a PQC as the function approximator, we refer the reader to \cite{chen2020variational,lockwood2020reinforcement,skolik2022quantum}.

To evaluate the performance gains of an ansatz that respects certain symmetries relevant to the problem at hand, we apply our model to a practically motivated learning task on graphs. The TSP is a low-level abstraction of a problem commonly encountered in mobility and logistics: given a list of locations, find the shortest route that connects all of these locations without visiting any of them twice. Formally, given a graph $\mathcal{G}(\mathcal{V}, \mathcal{E})$ with vertices $\mathcal{V}$ and weighted edges $\mathcal{E}$, the goal is to find a permutation of the vertices such that the resulting tour length is minimal, where a tour is a cycle that visits each vertex exactly once. A special case of the TSP is the 2D Euclidean TSP, where each node is defined in terms of its $x$ and $y$ coordinates in Euclidean space, and the edge weights are given by the Euclidean distance between these points. In this work, we deal with the symmetric Euclidean TSP on a complete graph, where the edges in the graph are undirected. This reduces the number of possible tours from $n!$ to $\frac{(n-1)!}{2}$. However, even in this reduced case the number of possible tours is already larger than 100k for instances with a modest number of ten cities, and the TSP is a well-known NP-hard problem.

To solve this problem with a RL approach, we follow the strategy introduced in \cite{dai2017learning}. In this work, a classical GNN is used to solve a number of combinatorial optimization problems on graphs. The authors show that this approach can outperform dedicated approximation algorithms defined for the TSP, like the Christofides algorithm, on instances of up to 300 cities. One episode of this learning algorithm for the TSP can be seen in \Cref{fig:qncorl_episodes}, and a detailed description of the learning task as implemented in our work is given in \Cref{sec:qgnn_architecture}.

\subsection{Solving the TSP with the QAOA}
\label{sec:tsp_qaoa}

The QAOA is implemented as a PQC by a Trotterization of Adiabatic Quantum Computation (AQC)~\cite{farhi2014quantum}. In general, for AQC, we consider a starting Hamiltonian $H_{0}$, for which both the formulation and the ground state are well known, and a final Hamiltonian $H_{P}$, that encodes the combinatorial optimization problem to be solved. The system is prepared in the ground state of the Hamiltonian $H_{0}$ and then it is evolved according to the time-dependent Hamiltonian:
$$H(t):=(1-s(t))H_{0}+s(t)H_{P},$$
where $s(t)$ is a real function called annealing schedule that satisfies the boundary conditions: $s(0)=0$ and $s(T)=1$, with $T$ the duration of the evolution. To implement this as a quantum circuit we use the following approximation:
\begin{equation}\label{eq:trotter}
    e^{A+B} \approx \left(e^{\frac{A}{r}} e^{\frac{B}{r}}\right)^{r}, r\rightarrow +\infty,
\end{equation}
which is knwon as the Trotter-Suzuki formula. By using this formula to approximate the evolution according to $H(t)$ and by parameterizing time we obtain:
\begin{equation}\label{eq:gates}
    e^{-i\beta_{p}H_{0}}e^{-i\gamma_{p}H_{P}}\cdots e^{-i\beta_{1}H_{0}}e^{-i\gamma_{1}H_{P}}.
\end{equation}
All of these matrices are unitary since the Hamiltonians in the argument of the exponential are all Hermitian. We define a parameter $p$ (integer known as the depth, or level) of QAOA which has the same role as r in \Cref{eq:trotter}. Increasing the depth $p$ adds additional layers to the QAOA circuit, and thus more closely approximates the H(t)~\cite{farhi2014quantum}.

In QAOA, all qubits are initialized to $|+\rangle^{\otimes_{n}}$, which is the ground state of $H_{0}=\sum_{i}\sigma^{(i)}_{x}$. Alternating layers of $H_p$ and $H_0$ are added to the circuit ($p$ times), parameterized by $\gamma$ and $\beta$ as defined in \Cref{eq:gates}. The values of $\gamma$ and $\beta$ are found by minimizing the expectation value of $H_p$, and thus approximate the optimal solution to the original combinatorial optimization problem. When using QAOA, we do not solve the TSP directly, but a QUBO representation of this problem. This representation is well-known, and can be found in~\cite{lucas2014ising}:

\begin{multline*}
    \sum_{(i,j)\in \mathcal{E}}\sum_{t=1}^{N}\frac{\varepsilon_{i,j}}{W}x_{i,t}x_{j,t+1}+\sum_{i\in \mathcal{V}}\left(1-\sum_{t=1}^{N}x_{i,t}\right)^{2}+\\
    +\sum_{t=1}^{N}\left(1-\sum_{i\in \mathcal{V}}x_{i,t}\right)^{2}+\sum_{(i,j)\notin \mathcal{E}}\sum_{t=1}^{N}x_{i,t}x_{j,t+1}.
\end{multline*}
Here, $\varepsilon_{i,j}$ are the distances between two nodes $i,j \in \mathcal{V}$ and $W:=\max_{(i,j)\in \mathcal{E}}\varepsilon_{i,j}$. The variables $x_{v,t}$ are binary decision variables denoting whether node $v$ is visited at step $t$.
We optimize the $\beta$ and $\gamma$ parameters for $p=1$ by performing a uniform random search over the space $[0, 2\pi]^2$, and selecting the best configuration found.

For $p=2$ and $3$, we optimized the circuit parameters using Constrained Optimization BY Linear Approximation (COBYLA). In addition, similar to \cite{qaoaperf}, we employed a $p$-dependent initialization technique for the circuit parameters. Specifically, $(p+1)$-depth QAOA circuit parameters were initialized with the optimal parameters from the $p$-depth circuit, as follows: 
\begin{align*}
    \boldsymbol \gamma &= (\gamma_{1},\ldots,\gamma_{p'},0),\\
    \boldsymbol \beta &= (\beta_{1},\ldots,\beta_{p'},0).
\end{align*}

This way we are allowing the parameter training procedure to start in a known acceptable state based on the results of the previous step. In \Cref{fig:tsp5_qaoa} we show our results for five-city instances of the TSP. The approximation ratio shown is derived by dividing the tour length of the best feasible solution, measured as the output of the trained QAOA circuit, by the optimal tour length of the respective instance. In addition, we compute results for two different $p=3$ QAOA circuits: the first is trained in the procedure described above (where the parameters are trained for each instance). The second uses the parameters of the best QAOA circuit out of those for all instances evaluated at $p=3$, following a concentration of parameters argument as presented in \cite{brandao2018concentration}. The second method is closer to what is done in a ML context, where one set of parameters is used to evaluate the performance on all validation samples.

Due to the number of qubits required to formulate a QUBO for the TSP, we were not able to run QAOA for all TSP instances. For example, an instance with six cities already requires $25$ qubits (we can fix the choice of the first city to be visited without loss of generality, requiring only $(n-1)^{2}$ variables to formulate the QUBO). A different formulation of the QUBO problem presented in~\cite{glos2020nlogn}, that needs $\mathcal{O}(n\log(n))$ qubits, avoids this issue by modifying the circuit design. However, this proposal increases the circuit depth considerably and is therefore ill-suited for the NISQ era. 

In \Cref{fig:tsp5_qaoa}, we can see that finding a good set of parameters for QAOA to solve TSP is hard even for five-city instances. We note that the performance of QAOA improves with higher $p$, however, QAOA performance is still far from matching the approximation ratios obtained by EQC even for $p=3$, which can be seen in \Cref{fig:tsp5_qaoa} as a black dashed line. Furthermore, we note that significant computational effort is required to obtain these results: methods like COBYLA are based on gradient descent, which requires us to evaluate the circuit many times until either convergence or the maximum number of iterations is reached. We also note that due to the heuristic optimization of the QAOA parameters themselves, we are not guaranteed that the configuration of parameters is optimal, which may result in either insufficient iterations to converge or premature convergence to sub-optimal parameter values. In an attempt to mitigate this, we tested several optimizers (Adam, SPSA, BFGS and COBYLA) and used the best results, which were those found by COBYLA.

\section*{Data availability statement}

The data sets containing the TSP instances studied in this work and their optimal solutions can be found on GitHub (\url{https://github.com/askolik/eqc_for_nco}).

\section*{Code availability statement}

The full code that was used to generate the numerical results and figures in this work can be found on GitHub (\url{https://github.com/askolik/eqc_for_nco}).

\section*{Acknowledgements}
AS thanks Elies Gil-Fuster and Radoica Draskic for valuable discussions about geometric deep learning. AS, MC and SY are funded by the German Ministry for Education and Research (BMB+F) in the project QAI2-Q-KIS under grant 13N15587. VD is supported by the Dutch Research Council (NWO/OCW), as part of the Quantum Software Consortium programme (project number 024.003.037). VD also acknowledges the support by the project NEASQC funded from the European Union’s Horizon 2020 research and innovation programme (grant agreement No 951821). VD also acknowledges the funding by the European Union under Grant Agreement 101080142 and the project EQUALITY.

\section*{Author contributions}

AS conceived the idea for this work. AS and VD conducted theoretical analysis of the proposed ansatz. AS and MC performed numerical simulations and the analysis of their results. SY created the data set used in this work. VD and TB supervised the project. AS, MC and SY wrote the first draft of the manuscript; all authors contributed to editing the final manuscript.

\section*{Competing interests}

The authors declare no competing interests.

\bibliography{lib}

\begin{thebibliography}{10}
\expandafter\ifx\csname url\endcsname\relax
  \def\url#1{\texttt{#1}}\fi
\expandafter\ifx\csname urlprefix\endcsname\relax\def\urlprefix{URL }\fi
\providecommand{\bibinfo}[2]{#2}
\providecommand{\eprint}[2][]{\url{#2}}

\bibitem{cerezo2021variational}
\bibinfo{author}{Cerezo, M.} \emph{et~al.}
\newblock \bibinfo{title}{Variational quantum algorithms}.
\newblock \emph{\bibinfo{journal}{Nature Reviews Physics}}
  \textbf{\bibinfo{volume}{3}}, \bibinfo{pages}{625--644}
  (\bibinfo{year}{2021}).

\bibitem{farhi2014quantum}
\bibinfo{author}{Farhi, E.}, \bibinfo{author}{Goldstone, J.} \&
  \bibinfo{author}{Gutmann, S.}
\newblock \bibinfo{title}{A quantum approximate optimization algorithm}.
\newblock \emph{\bibinfo{journal}{arXiv preprint arXiv:1411.4028}}
  (\bibinfo{year}{2014}).

\bibitem{bennett2021quantum}
\bibinfo{author}{Bennett, T.}, \bibinfo{author}{Matwiejew, E.},
  \bibinfo{author}{Marsh, S.} \& \bibinfo{author}{Wang, J.~B.}
\newblock \bibinfo{title}{Quantum walk-based vehicle routing optimisation}.
\newblock \emph{\bibinfo{journal}{Frontiers in Physics}} \bibinfo{pages}{692}
  (\bibinfo{year}{2021}).

\bibitem{grimsley2018adapt}
\bibinfo{author}{Grimsley, H.~R.}, \bibinfo{author}{Economou, S.~E.},
  \bibinfo{author}{Barnes, E.} \& \bibinfo{author}{Mayhall, N.~J.}
\newblock \bibinfo{title}{Adapt-vqe: An exact variational algorithm for
  fermionic simulations on a quantum computer}.
\newblock \emph{\bibinfo{journal}{arXiv preprint arXiv:1812.11173}}
  (\bibinfo{year}{2018}).

\bibitem{peruzzo2014variational}
\bibinfo{author}{Peruzzo, A.} \emph{et~al.}
\newblock \bibinfo{title}{A variational eigenvalue solver on a photonic quantum
  processor}.
\newblock \emph{\bibinfo{journal}{Nature communications}}
  \textbf{\bibinfo{volume}{5}}, \bibinfo{pages}{1--7} (\bibinfo{year}{2014}).

\bibitem{kandala2017hardware}
\bibinfo{author}{Kandala, A.} \emph{et~al.}
\newblock \bibinfo{title}{Hardware-efficient variational quantum eigensolver
  for small molecules and quantum magnets}.
\newblock \emph{\bibinfo{journal}{Nature}} \textbf{\bibinfo{volume}{549}},
  \bibinfo{pages}{242--246} (\bibinfo{year}{2017}).

\bibitem{benedetti2019parameterized}
\bibinfo{author}{Benedetti, M.}, \bibinfo{author}{Lloyd, E.},
  \bibinfo{author}{Sack, S.} \& \bibinfo{author}{Fiorentini, M.}
\newblock \bibinfo{title}{Parameterized quantum circuits as machine learning
  models}.
\newblock \emph{\bibinfo{journal}{Quantum Science and Technology}}
  \textbf{\bibinfo{volume}{4}}, \bibinfo{pages}{043001} (\bibinfo{year}{2019}).

\bibitem{mcclean2018barren}
\bibinfo{author}{McClean, J.~R.}, \bibinfo{author}{Boixo, S.},
  \bibinfo{author}{Smelyanskiy, V.~N.}, \bibinfo{author}{Babbush, R.} \&
  \bibinfo{author}{Neven, H.}
\newblock \bibinfo{title}{Barren plateaus in quantum neural network training
  landscapes}.
\newblock \emph{\bibinfo{journal}{Nature communications}}
  \textbf{\bibinfo{volume}{9}}, \bibinfo{pages}{1--6} (\bibinfo{year}{2018}).

\bibitem{arrasmith2021effect}
\bibinfo{author}{Arrasmith, A.}, \bibinfo{author}{Cerezo, M.},
  \bibinfo{author}{Czarnik, P.}, \bibinfo{author}{Cincio, L.} \&
  \bibinfo{author}{Coles, P.~J.}
\newblock \bibinfo{title}{Effect of barren plateaus on gradient-free
  optimization}.
\newblock \emph{\bibinfo{journal}{Quantum}} \textbf{\bibinfo{volume}{5}},
  \bibinfo{pages}{558} (\bibinfo{year}{2021}).

\bibitem{marrero2021entanglement}
\bibinfo{author}{Marrero, C.~O.}, \bibinfo{author}{Kieferov{\'a}, M.} \&
  \bibinfo{author}{Wiebe, N.}
\newblock \bibinfo{title}{Entanglement-induced barren plateaus}.
\newblock \emph{\bibinfo{journal}{PRX Quantum}} \textbf{\bibinfo{volume}{2}},
  \bibinfo{pages}{040316} (\bibinfo{year}{2021}).

\bibitem{smagt1998feed}
\bibinfo{author}{Smagt, P. v.~d.} \& \bibinfo{author}{Hirzinger, G.}
\newblock \bibinfo{title}{Why feed-forward networks are in a bad shape}.
\newblock In \emph{\bibinfo{booktitle}{International Conference on Artificial
  Neural Networks}}, \bibinfo{pages}{159--164}
  (\bibinfo{organization}{Springer}, \bibinfo{year}{1998}).

\bibitem{silver2016mastering}
\bibinfo{author}{Silver, D.} \emph{et~al.}
\newblock \bibinfo{title}{Mastering the game of go with deep neural networks
  and tree search}.
\newblock \emph{\bibinfo{journal}{nature}} \textbf{\bibinfo{volume}{529}},
  \bibinfo{pages}{484--489} (\bibinfo{year}{2016}).

\bibitem{jumper2021highly}
\bibinfo{author}{Jumper, J.} \emph{et~al.}
\newblock \bibinfo{title}{Highly accurate protein structure prediction with
  alphafold}.
\newblock \emph{\bibinfo{journal}{Nature}} \textbf{\bibinfo{volume}{596}},
  \bibinfo{pages}{583--589} (\bibinfo{year}{2021}).

\bibitem{ramesh2021zero}
\bibinfo{author}{Ramesh, A.} \emph{et~al.}
\newblock \bibinfo{title}{Zero-shot text-to-image generation}.
\newblock In \emph{\bibinfo{booktitle}{International Conference on Machine
  Learning}}, \bibinfo{pages}{8821--8831} (\bibinfo{organization}{PMLR},
  \bibinfo{year}{2021}).

\bibitem{zhou2020graph}
\bibinfo{author}{Zhou, J.} \emph{et~al.}
\newblock \bibinfo{title}{Graph neural networks: A review of methods and
  applications}.
\newblock \emph{\bibinfo{journal}{AI Open}} \textbf{\bibinfo{volume}{1}},
  \bibinfo{pages}{57--81} (\bibinfo{year}{2020}).

\bibitem{cappart2021combinatorial}
\bibinfo{author}{Cappart, Q.} \emph{et~al.}
\newblock \bibinfo{title}{Combinatorial optimization and reasoning with graph
  neural networks}.
\newblock \emph{\bibinfo{journal}{arXiv preprint arXiv:2102.09544}}
  (\bibinfo{year}{2021}).

\bibitem{verdon2019quantum}
\bibinfo{author}{Verdon, G.} \emph{et~al.}
\newblock \bibinfo{title}{Quantum graph neural networks}.
\newblock \emph{\bibinfo{journal}{arXiv preprint arXiv:1909.12264}}
  (\bibinfo{year}{2019}).

\bibitem{henry2021quantum}
\bibinfo{author}{Henry, L.-P.}, \bibinfo{author}{Thabet, S.},
  \bibinfo{author}{Dalyac, C.} \& \bibinfo{author}{Henriet, L.}
\newblock \bibinfo{title}{Quantum evolution kernel: Machine learning on graphs
  with programmable arrays of qubits}.
\newblock \emph{\bibinfo{journal}{Physical Review A}}
  \textbf{\bibinfo{volume}{104}}, \bibinfo{pages}{032416}
  (\bibinfo{year}{2021}).

\bibitem{zheng2021quantum}
\bibinfo{author}{Zheng, J.}, \bibinfo{author}{Gao, Q.} \&
  \bibinfo{author}{L{\"u}, Y.}
\newblock \bibinfo{title}{Quantum graph convolutional neural networks}.
\newblock In \emph{\bibinfo{booktitle}{2021 40th Chinese Control Conference
  (CCC)}}, \bibinfo{pages}{6335--6340} (\bibinfo{organization}{IEEE},
  \bibinfo{year}{2021}).

\bibitem{mernyei2021equivariant}
\bibinfo{author}{Mernyei, P.}, \bibinfo{author}{Meichanetzidis, K.} \&
  \bibinfo{author}{Ceylan, I.~I.}
\newblock \bibinfo{title}{Equivariant quantum graph circuits}.
\newblock In \emph{\bibinfo{booktitle}{International Conference on Machine
  Learning}}, \bibinfo{pages}{15401--15420} (\bibinfo{organization}{PMLR},
  \bibinfo{year}{2022}).

\bibitem{lucas2014ising}
\bibinfo{author}{Lucas, A.}
\newblock \bibinfo{title}{Ising formulations of many np problems}.
\newblock \emph{\bibinfo{journal}{Frontiers in physics}}
  \textbf{\bibinfo{volume}{2}}, \bibinfo{pages}{5} (\bibinfo{year}{2014}).

\bibitem{fan2019graph}
\bibinfo{author}{Fan, W.} \emph{et~al.}
\newblock \bibinfo{title}{Graph neural networks for social recommendation}.
\newblock In \emph{\bibinfo{booktitle}{The world wide web conference}},
  \bibinfo{pages}{417--426} (\bibinfo{year}{2019}).

\bibitem{bronstein2021geometric}
\bibinfo{author}{Bronstein, M.~M.}, \bibinfo{author}{Bruna, J.},
  \bibinfo{author}{Cohen, T.} \& \bibinfo{author}{Veli{\v{c}}kovi{\'c}, P.}
\newblock \bibinfo{title}{Geometric deep learning: Grids, groups, graphs,
  geodesics, and gauges}.
\newblock \emph{\bibinfo{journal}{arXiv preprint arXiv:2104.13478}}
  (\bibinfo{year}{2021}).

\bibitem{ozaeta2022expectation}
\bibinfo{author}{Ozaeta, A.}, \bibinfo{author}{van Dam, W.} \&
  \bibinfo{author}{McMahon, P.~L.}
\newblock \bibinfo{title}{Expectation values from the single-layer quantum
  approximate optimization algorithm on ising problems}.
\newblock \emph{\bibinfo{journal}{Quantum Science and Technology}}
  (\bibinfo{year}{2022}).

\bibitem{cerezo2021cost}
\bibinfo{author}{Cerezo, M.}, \bibinfo{author}{Sone, A.},
  \bibinfo{author}{Volkoff, T.}, \bibinfo{author}{Cincio, L.} \&
  \bibinfo{author}{Coles, P.~J.}
\newblock \bibinfo{title}{Cost function dependent barren plateaus in shallow
  parametrized quantum circuits}.
\newblock \emph{\bibinfo{journal}{Nature communications}}
  \textbf{\bibinfo{volume}{12}}, \bibinfo{pages}{1--12} (\bibinfo{year}{2021}).

\bibitem{schatzki2022theoretical}
\bibinfo{author}{Schatzki, L.}, \bibinfo{author}{Larocca, M.},
  \bibinfo{author}{Sauvage, F.} \& \bibinfo{author}{Cerezo, M.}
\newblock \bibinfo{title}{Theoretical guarantees for permutation-equivariant
  quantum neural networks}.
\newblock \emph{\bibinfo{journal}{arXiv preprint arXiv:2210.09974}}
  (\bibinfo{year}{2022}).

\bibitem{larocca2022diagnosing}
\bibinfo{author}{Larocca, M.} \emph{et~al.}
\newblock \bibinfo{title}{Diagnosing barren plateaus with tools from quantum
  optimal control}.
\newblock \emph{\bibinfo{journal}{Quantum}} \textbf{\bibinfo{volume}{6}},
  \bibinfo{pages}{824} (\bibinfo{year}{2022}).

\bibitem{wang2021noise}
\bibinfo{author}{Wang, S.} \emph{et~al.}
\newblock \bibinfo{title}{Noise-induced barren plateaus in variational quantum
  algorithms}.
\newblock \emph{\bibinfo{journal}{Nature communications}}
  \textbf{\bibinfo{volume}{12}}, \bibinfo{pages}{1--11} (\bibinfo{year}{2021}).

\bibitem{mei2021learning}
\bibinfo{author}{Mei, S.}, \bibinfo{author}{Misiakiewicz, T.} \&
  \bibinfo{author}{Montanari, A.}
\newblock \bibinfo{title}{Learning with invariances in random features and
  kernel models}.
\newblock In \emph{\bibinfo{booktitle}{Conference on Learning Theory}},
  \bibinfo{pages}{3351--3418} (\bibinfo{organization}{PMLR},
  \bibinfo{year}{2021}).

\bibitem{caro2022generalization}
\bibinfo{author}{Caro, M.~C.} \emph{et~al.}
\newblock \bibinfo{title}{Generalization in quantum machine learning from few
  training data}.
\newblock \emph{\bibinfo{journal}{Nature communications}}
  \textbf{\bibinfo{volume}{13}}, \bibinfo{pages}{1--11} (\bibinfo{year}{2022}).

\bibitem{bengio2021machine}
\bibinfo{author}{Bengio, Y.}, \bibinfo{author}{Lodi, A.} \&
  \bibinfo{author}{Prouvost, A.}
\newblock \bibinfo{title}{Machine learning for combinatorial optimization: a
  methodological tour d’horizon}.
\newblock \emph{\bibinfo{journal}{European Journal of Operational Research}}
  \textbf{\bibinfo{volume}{290}}, \bibinfo{pages}{405--421}
  (\bibinfo{year}{2021}).

\bibitem{vinyals2015pointer}
\bibinfo{author}{Vinyals, O.}, \bibinfo{author}{Fortunato, M.} \&
  \bibinfo{author}{Jaitly, N.}
\newblock \bibinfo{title}{Pointer networks}.
\newblock \emph{\bibinfo{journal}{arXiv preprint arXiv:1506.03134}}
  (\bibinfo{year}{2015}).

\bibitem{dai2017learning}
\bibinfo{author}{Dai, H.}, \bibinfo{author}{Khalil, E.~B.},
  \bibinfo{author}{Zhang, Y.}, \bibinfo{author}{Dilkina, B.} \&
  \bibinfo{author}{Song, L.}
\newblock \bibinfo{title}{Learning combinatorial optimization algorithms over
  graphs}.
\newblock \emph{\bibinfo{journal}{arXiv preprint arXiv:1704.01665}}
  (\bibinfo{year}{2017}).

\bibitem{mirhoseini2020chip}
\bibinfo{author}{Mirhoseini, A.} \emph{et~al.}
\newblock \bibinfo{title}{Chip placement with deep reinforcement learning}.
\newblock \emph{\bibinfo{journal}{arXiv preprint arXiv:2004.10746}}
  (\bibinfo{year}{2020}).

\bibitem{nazari2018reinforcement}
\bibinfo{author}{Nazari, M.}, \bibinfo{author}{Oroojlooy, A.},
  \bibinfo{author}{Snyder, L.~V.} \& \bibinfo{author}{Tak{\'a}{\v{c}}, M.}
\newblock \bibinfo{title}{Reinforcement learning for solving the vehicle
  routing problem}.
\newblock \emph{\bibinfo{journal}{arXiv preprint arXiv:1802.04240}}
  (\bibinfo{year}{2018}).

\bibitem{marwaha2021local}
\bibinfo{author}{Marwaha, K.}
\newblock \bibinfo{title}{Local classical max-cut algorithm outperforms $ p= 2$
  qaoa on high-girth regular graphs}.
\newblock \emph{\bibinfo{journal}{Quantum}} \textbf{\bibinfo{volume}{5}},
  \bibinfo{pages}{437} (\bibinfo{year}{2021}).

\bibitem{szegedy2019qaoa}
\bibinfo{author}{Szegedy, M.}
\newblock \bibinfo{title}{What do qaoa energies reveal about graphs?}
\newblock \emph{\bibinfo{journal}{arXiv preprint arXiv:1912.12277}}
  (\bibinfo{year}{2019}).

\bibitem{harman2015sin}
\bibinfo{author}{Harman, G.~H.}, \bibinfo{author}{Kulkarni, S.~R.} \&
  \bibinfo{author}{Narayanan, H.}
\newblock \bibinfo{title}{$\mathrm{sin}(\omega x)$ can approximate almost every
  finite set of samples}.
\newblock \emph{\bibinfo{journal}{Constructive Approximation}}
  \textbf{\bibinfo{volume}{42}}, \bibinfo{pages}{303--311}
  (\bibinfo{year}{2015}).

\bibitem{lloyd2018quantum}
\bibinfo{author}{Lloyd, S.}
\newblock \bibinfo{title}{Quantum approximate optimization is computationally
  universal}.
\newblock \emph{\bibinfo{journal}{arXiv preprint arXiv:1812.11075}}
  (\bibinfo{year}{2018}).

\bibitem{morales2020universality}
\bibinfo{author}{Morales, M.~E.}, \bibinfo{author}{Biamonte, J.~D.} \&
  \bibinfo{author}{Zimbor{\'a}s, Z.}
\newblock \bibinfo{title}{On the universality of the quantum approximate
  optimization algorithm}.
\newblock \emph{\bibinfo{journal}{Quantum Information Processing}}
  \textbf{\bibinfo{volume}{19}}, \bibinfo{pages}{1--26} (\bibinfo{year}{2020}).

\bibitem{py_tsp}
\bibinfo{title}{Python-tsp https://github.com/fillipe-gsm/python-tsp}.
\newblock \urlprefix\url{https://github.com/fillipe-gsm/python-tsp}.

\bibitem{christofides1976worst}
\bibinfo{author}{Christofides, N.}
\newblock \bibinfo{title}{Worst-case analysis of a new heuristic for the
  travelling salesman problem}.
\newblock \bibinfo{type}{Tech. Rep.}, \bibinfo{institution}{Carnegie-Mellon
  Univ Pittsburgh Pa Management Sciences Research Group}
  (\bibinfo{year}{1976}).

\bibitem{brandao2018concentration}
\bibinfo{author}{Brandao, F.~G.}, \bibinfo{author}{Broughton, M.},
  \bibinfo{author}{Farhi, E.}, \bibinfo{author}{Gutmann, S.} \&
  \bibinfo{author}{Neven, H.}
\newblock \bibinfo{title}{For fixed control parameters the quantum approximate
  optimization algorithm's objective function value concentrates for typical
  instances.}
\newblock \emph{\bibinfo{journal}{arXiv preprint arXiv:1812.04170}}
  (\bibinfo{year}{2018}).

\bibitem{farhi2016quantum}
\bibinfo{author}{Farhi, E.} \& \bibinfo{author}{Harrow, A.~W.}
\newblock \bibinfo{title}{Quantum supremacy through the quantum approximate
  optimization algorithm}.
\newblock \emph{\bibinfo{journal}{arXiv preprint arXiv:1602.07674}}
  (\bibinfo{year}{2016}).

\bibitem{farhi2020quantum}
\bibinfo{author}{Farhi, E.}, \bibinfo{author}{Gamarnik, D.} \&
  \bibinfo{author}{Gutmann, S.}
\newblock \bibinfo{title}{The quantum approximate optimization algorithm needs
  to see the whole graph: A typical case}.
\newblock \emph{\bibinfo{journal}{arXiv preprint arXiv:2004.09002}}
  (\bibinfo{year}{2020}).

\bibitem{bravyi2020obstacles}
\bibinfo{author}{Bravyi, S.}, \bibinfo{author}{Kliesch, A.},
  \bibinfo{author}{Koenig, R.} \& \bibinfo{author}{Tang, E.}
\newblock \bibinfo{title}{Obstacles to variational quantum optimization from
  symmetry protection}.
\newblock \emph{\bibinfo{journal}{Physical review letters}}
  \textbf{\bibinfo{volume}{125}}, \bibinfo{pages}{260505}
  (\bibinfo{year}{2020}).

\bibitem{zhou2020quantum}
\bibinfo{author}{Zhou, L.}, \bibinfo{author}{Wang, S.-T.},
  \bibinfo{author}{Choi, S.}, \bibinfo{author}{Pichler, H.} \&
  \bibinfo{author}{Lukin, M.~D.}
\newblock \bibinfo{title}{Quantum approximate optimization algorithm:
  Performance, mechanism, and implementation on near-term devices}.
\newblock \emph{\bibinfo{journal}{Physical Review X}}
  \textbf{\bibinfo{volume}{10}}, \bibinfo{pages}{021067}
  (\bibinfo{year}{2020}).

\bibitem{shaydulin2022parameter}
\bibinfo{author}{Shaydulin, R.}, \bibinfo{author}{Lotshaw, P.~C.},
  \bibinfo{author}{Larson, J.}, \bibinfo{author}{Ostrowski, J.} \&
  \bibinfo{author}{Humble, T.~S.}
\newblock \bibinfo{title}{Parameter transfer for quantum approximate
  optimization of weighted maxcut}.
\newblock \emph{\bibinfo{journal}{arXiv preprint arXiv:2201.11785}}
  (\bibinfo{year}{2022}).

\bibitem{tunyasuvunakool2021highly}
\bibinfo{author}{Tunyasuvunakool, K.} \emph{et~al.}
\newblock \bibinfo{title}{Highly accurate protein structure prediction for the
  human proteome}.
\newblock \emph{\bibinfo{journal}{Nature}} \textbf{\bibinfo{volume}{596}},
  \bibinfo{pages}{590--596} (\bibinfo{year}{2021}).

\bibitem{lecun1995convolutional}
\bibinfo{author}{LeCun, Y.}, \bibinfo{author}{Bengio, Y.} \emph{et~al.}
\newblock \bibinfo{title}{Convolutional networks for images, speech, and time
  series}.
\newblock \emph{\bibinfo{journal}{The handbook of brain theory and neural
  networks}} \textbf{\bibinfo{volume}{3361}}, \bibinfo{pages}{1995}
  (\bibinfo{year}{1995}).

\bibitem{schmidt2019recurrent}
\bibinfo{author}{Schmidt, R.~M.}
\newblock \bibinfo{title}{Recurrent neural networks (rnns): A gentle
  introduction and overview}.
\newblock \emph{\bibinfo{journal}{arXiv preprint arXiv:1912.05911}}
  (\bibinfo{year}{2019}).

\bibitem{wu2020comprehensive}
\bibinfo{author}{Wu, Z.} \emph{et~al.}
\newblock \bibinfo{title}{A comprehensive survey on graph neural networks}.
\newblock \emph{\bibinfo{journal}{IEEE transactions on neural networks and
  learning systems}} \textbf{\bibinfo{volume}{32}}, \bibinfo{pages}{4--24}
  (\bibinfo{year}{2020}).

\bibitem{mansimov2019molecular}
\bibinfo{author}{Mansimov, E.}, \bibinfo{author}{Mahmood, O.},
  \bibinfo{author}{Kang, S.} \& \bibinfo{author}{Cho, K.}
\newblock \bibinfo{title}{Molecular geometry prediction using a deep generative
  graph neural network}.
\newblock \emph{\bibinfo{journal}{Scientific reports}}
  \textbf{\bibinfo{volume}{9}}, \bibinfo{pages}{1--13} (\bibinfo{year}{2019}).

\bibitem{long2021graph}
\bibinfo{author}{Long, J.} \emph{et~al.}
\newblock \bibinfo{title}{A graph neural network for superpixel image
  classification}.
\newblock In \emph{\bibinfo{booktitle}{Journal of Physics: Conference Series}},
  vol. \bibinfo{volume}{1871}, \bibinfo{pages}{012071}
  (\bibinfo{organization}{IOP Publishing}, \bibinfo{year}{2021}).

\bibitem{cong2019quantum}
\bibinfo{author}{Cong, I.}, \bibinfo{author}{Choi, S.} \&
  \bibinfo{author}{Lukin, M.~D.}
\newblock \bibinfo{title}{Quantum convolutional neural networks}.
\newblock \emph{\bibinfo{journal}{Nature Physics}}
  \textbf{\bibinfo{volume}{15}}, \bibinfo{pages}{1273--1278}
  (\bibinfo{year}{2019}).

\bibitem{larocca2022group}
\bibinfo{author}{Larocca, M.} \emph{et~al.}
\newblock \bibinfo{title}{Group-invariant quantum machine learning}.
\newblock \emph{\bibinfo{journal}{arXiv preprint arXiv:2205.02261}}
  (\bibinfo{year}{2022}).

\bibitem{chen2022novel}
\bibinfo{author}{Chen, Y.}, \bibinfo{author}{Wang, C.}, \bibinfo{author}{Guo,
  H.} \emph{et~al.}
\newblock \bibinfo{title}{Novel architecture of parameterized quantum circuit
  for graph convolutional network}.
\newblock \emph{\bibinfo{journal}{arXiv preprint arXiv:2203.03251}}
  (\bibinfo{year}{2022}).

\bibitem{sutton2018reinforcement}
\bibinfo{author}{Sutton, R.~S.} \& \bibinfo{author}{Barto, A.~G.}
\newblock \emph{\bibinfo{title}{Reinforcement learning: An introduction}}
  (\bibinfo{publisher}{MIT press}, \bibinfo{year}{2018}).

\bibitem{mnih2015human}
\bibinfo{author}{Mnih, V.} \emph{et~al.}
\newblock \bibinfo{title}{Human-level control through deep reinforcement
  learning}.
\newblock \emph{\bibinfo{journal}{nature}} \textbf{\bibinfo{volume}{518}},
  \bibinfo{pages}{529--533} (\bibinfo{year}{2015}).

\bibitem{chen2020variational}
\bibinfo{author}{Chen, S. Y.-C.} \emph{et~al.}
\newblock \bibinfo{title}{Variational quantum circuits for deep reinforcement
  learning}.
\newblock \emph{\bibinfo{journal}{IEEE Access}} \textbf{\bibinfo{volume}{8}},
  \bibinfo{pages}{141007--141024} (\bibinfo{year}{2020}).

\bibitem{lockwood2020reinforcement}
\bibinfo{author}{Lockwood, O.} \& \bibinfo{author}{Si, M.}
\newblock \bibinfo{title}{Reinforcement learning with quantum variational
  circuit}.
\newblock In \emph{\bibinfo{booktitle}{Proceedings of the AAAI Conference on
  Artificial Intelligence and Interactive Digital Entertainment}},
  vol.~\bibinfo{volume}{16}, \bibinfo{pages}{245--251} (\bibinfo{year}{2020}).

\bibitem{skolik2022quantum}
\bibinfo{author}{Skolik, A.}, \bibinfo{author}{Jerbi, S.} \&
  \bibinfo{author}{Dunjko, V.}
\newblock \bibinfo{title}{Quantum agents in the gym: a variational quantum
  algorithm for deep q-learning}.
\newblock \emph{\bibinfo{journal}{Quantum}} \textbf{\bibinfo{volume}{6}},
  \bibinfo{pages}{720} (\bibinfo{year}{2022}).

\bibitem{qaoaperf}
\bibinfo{author}{Zhou, L.}, \bibinfo{author}{Wang, S.-T.},
  \bibinfo{author}{Choi, S.}, \bibinfo{author}{Pichler, H.} \&
  \bibinfo{author}{Lukin, M.~D.}
\newblock \bibinfo{title}{Quantum approximate optimization algorithm:
  Performance, mechanism, and implementation on near-term devices}.
\newblock \emph{\bibinfo{journal}{Physical Review X}}
  \textbf{\bibinfo{volume}{11}}, \bibinfo{pages}{021067}
  (\bibinfo{year}{2020}).

\bibitem{glos2020nlogn}
\bibinfo{author}{Glos, A.}, \bibinfo{author}{Krawiec, A.} \&
  \bibinfo{author}{Zimbor{\'a}s, Z.}
\newblock \bibinfo{title}{Space-efficient binary optimization for variational
  quantum computing}.
\newblock \emph{\bibinfo{journal}{npj Quantum Information}}
  \textbf{\bibinfo{volume}{8}}, \bibinfo{pages}{39} (\bibinfo{year}{2022}).

\end{thebibliography}
\bibliographystyle{naturemag}

\onecolumngrid
\appendix

\section{Supplementary discussion}

\subsection{Proof of equivariance of ansatz}
\label{sec:proof_equivariance}

\begin{proof}[Proof of Theorem 1]
We first recall the definitions from Theorem 1. Let the ansatz of depth $p$ be of the type as defined in Equation 1 with initial state $\ket{+}^{\otimes n}$ and parameters $\boldsymbol \beta, \boldsymbol \gamma \in \mathbb{R}^p$, that represents an instance of a graph $\mathcal{G}$ with nodes $\mathcal{V}$ and the list of edges $\mathcal{E}$ with corresponding edge weights $\varepsilon_{ij}$, and node features $\boldsymbol \alpha \in \mathbb{R}^n$ with $n=|\mathcal{V}|$. Let $\sigma$ be a permutation of the vertices in $\mathcal{V}$, $P_{\sigma} \in \mathbb{B}^{n \times n}$ the corresponding permutation matrix that acts on the weighted adjacency matrix $A$ of $\mathcal{G}$, and $\Tilde{P}_{\sigma} \in \mathbb{B}^{2^n \times 2^n}$ a matrix that maps the tensor product $\ket{v_1} \otimes \ket{v_2} \otimes \dots \otimes \ket{v_n}$ with $\ket{v_i} \in \mathbb{C}^2$ to $\ket{v_{\Tilde{p}_{\sigma}(1)}} \otimes \ket{v_{\Tilde{p}_{\sigma}(2)}} \otimes \dots \otimes \ket{v_{\Tilde{p}_{\sigma}(n)}}$.

We want to prove that our ansatz is equivariant under permutations of the nodes of the input graph $\mathcal{G}(\mathcal{V}, \mathcal{E})$,
 \begin{equation}\label{eq:restate_equivariance}
     |\mathcal{E}_{(P_{\sigma}^T A P_{\sigma})}, P_{\sigma}^T \boldsymbol \alpha, \boldsymbol \beta, \boldsymbol \gamma \rangle_p = \Tilde{P}_{\sigma} |\mathcal{E}_A,  \boldsymbol \alpha, \boldsymbol \beta, \boldsymbol \gamma \rangle_p.
 \end{equation}
 For this, we have to prove that the unitaries that are used to construct the full circuit are permutation equivariant, i.e.,
 \begin{equation}
     \Tilde{P}_{\sigma} U_{G}(\mathcal{E}_A, \gamma_l) \Tilde{P}_{\sigma}^{\dagger} = U_{G}(\mathcal{E}_{(P_{\sigma}^T A P_{\sigma})}, \gamma_l) 
 \end{equation}
 and
 \begin{equation}
     \Tilde{P}_{\sigma} U_{N}(\boldsymbol \alpha, \beta_l) \Tilde{P}_{\sigma}^{\dagger} = U_{N}(P_{\sigma}^T \boldsymbol \alpha, \beta_l).
 \end{equation}
 We begin with the edge-encoding unitary $U_G$:
 \begin{align}
     &\Tilde{P}_{\sigma} U_{G}(\mathcal{E}_A, \gamma_l) \Tilde{P}_{\sigma}^{\dagger} = \Tilde{P}_{\sigma} e^{-i \gamma_l H_{\mathcal{G}}} \Tilde{P}_{\sigma}^{\dagger}\\
     &= e^{-i \gamma_l \Tilde{P}_{\sigma} H_{\mathcal{G}}\Tilde{P}_{\sigma}^{\dagger}}\\
     &= e^{-i \gamma_l H_{\mathcal{G}_{(P_{\sigma}^T A P_{\sigma})}}}\\
     &= U_{G}(\mathcal{E}_{(P_{\sigma}^T A P_{\sigma})}, \gamma_l),
 \end{align}
 where line (5) holds because for any unitary $U$ we have $ U e^{-iH_{\mathcal{G}}} U^\dagger = e^{-i U H_{\mathcal{G}} U^\dagger}$, and line (6) holds because $H_{\mathcal{G}} = \sum_{\varepsilon \in \mathcal{E}} \varepsilon_{ij} Z_i Z_j$ is defined completely through the adjacency matrix and the edge weights of the input graph $\mathcal{G}$, and $\Tilde{P}_{\sigma}$ and $P_{\sigma}$ are defined through permutations $\sigma$ on the nodes of $\mathcal{G}$. Similarly, we get
 \begin{align}
     &\Tilde{P}_{\sigma} U_{N}(\boldsymbol \alpha, \beta_l) \Tilde{P}_{\sigma}^{\dagger} = \Tilde{P}_{\sigma}  \bigotimes_i^{|\mathcal{V}|} \mathrm{Rx}(\alpha_i, \beta_l) \Tilde{P}_{\sigma}^{\dagger}\\
     &= \Tilde{P}_{\sigma}  \bigotimes_i^{|\mathcal{V}|} \mathrm{exp}\left(-i \frac{\alpha_i \beta_l}{2} X \right) \Tilde{P}_{\sigma}^{\dagger}\\
     &= \bigotimes_i^{|\mathcal{V}|} \mathrm{exp}\left(-i \frac{\alpha_{\sigma^{-1}(i)} \beta_l}{2} X \right) \\
     &= U_{N}(P_{\sigma}^T \boldsymbol \alpha, \beta_l).
 \end{align}
 As each of the unitaries in the circuit is equivariant under permutations of the graph nodes, and the initial state is trivially permutation invariant $\ket{+} = \Tilde{P_{\sigma}} \ket{+}$, we arrive at \Cref{eq:restate_equivariance}.

\end{proof}

In the main text, we have briefly discussed scenarios in which each node in the graph corresponds to multiple qubits in the ansatz. The above proof holds for the one qubit case, and whether an ansatz with multiple qubits per node will still be equivariant depends on the chosen mapping of node and edge features.

\subsection{Proof that ansatz can generate optimal tours for rationally independent set of edge weights}
\label{sec:opt_containing_ratin}

In this section, we show how the fact that the sine function can approximate an arbitrary set of rationally independent values $\{x_1, \dots, x_n\}$ with labels $y_i \in [-1, 1]$ can be used prove that our ansatz at depth one can construct the optimal tour for a graph with edge weights that are rationally independent.

\begin{theorem}[Ansatz can generate optimal tours for rationally independent edge weights]
    There exists a setting $(\beta, \gamma)^*$ for each graph instance of the symmetric TSP such that the ansatz at depth one described in Equation 1 will produce the optimal tour $T^*$ with the node selection process described in Definition 2, given that the edge weights $\varepsilon_{ij}$ of the graph are rationally independent and $\varepsilon_{ij} \gamma \neq \frac{\pi}{4} + n \pi ~\forall~ n \in \mathbb{Z}$.
\end{theorem}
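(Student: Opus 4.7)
My plan is to exploit the explicit depth-one formula from \Cref{eq:anal_expectation} and reduce the question to prescribing, via $\gamma$, the signs of $\sin(\varepsilon_{ij}\gamma)$ and $\cos(\varepsilon_{ij}\gamma)$ for every edge. Let $E^{*}\subseteq\mathcal{E}$ be the set of edges in the (unique, up to direction) optimal tour $T^{*}$. I aim to choose $\gamma$ so that for every $(i,j)\in\mathcal{E}$:
\begin{equation*}
\sin(\varepsilon_{ij}\gamma)>0 \text{ if } (i,j)\in E^{*}, \qquad \sin(\varepsilon_{ij}\gamma)<0 \text{ if } (i,j)\notin E^{*},
\end{equation*}
while $\cos(\varepsilon_{ij}\gamma)>0$ for \emph{all} edges. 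Picking any $\beta$ with $\sin(\beta\pi)>0$ and noting that the edge-weight prefactor $\varepsilon_{v_{t-1},v_l}$ is strictly positive, \Cref{eq:anal_expectation} then yields $\operatorname{sign}\langle O_{v_l}\rangle=+1$ exactly when the edge from $v_{t-1}$ to the candidate $v_l$ lies in $E^{*}$, and $-1$ otherwise. Because $T^{*}$ is a Hamiltonian cycle, at every step $t$ of the node-selection process there is exactly one available candidate $v_t^{*}$ for which $(v_{t-1},v_t^{*})\in E^{*}$; hence $\mathrm{argmax}_{v_l\in\mathcal{V}_a}Q(\mathcal{G},v_l)=v_t^{*}$ for all $t=1,\ldots,n-2$, and the process reconstructs $T^{*}$.

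To realize the desired sign pattern I would invoke Kronecker's simultaneous approximation theorem (equivalently, Weyl's equidistribution theorem on the torus). Rational independence of $\{\varepsilon_{ij}\}_{(i,j)\in\mathcal{E}}$ implies that the one-parameter curve $\gamma\mapsto(\varepsilon_{ij}\gamma\bmod 2\pi)_{(i,j)\in\mathcal{E}}$ has dense image in the torus $\mathbb{T}^{|\mathcal{E}|}$. Fix any $\phi_{0}\in(0,\pi/2)$ (e.g.\ $\phi_{0}=\pi/6$, comfortably away from $\pi/4$) and target phases $\phi^{*}_{ij}:=+\phi_{0}$ if $(i,j)\in E^{*}$, $\phi^{*}_{ij}:=-\phi_{0}$ otherwise. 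Then by density, for every $\delta>0$ there exists $\gamma\in\mathbb{R}$ with $|\varepsilon_{ij}\gamma-\phi^{*}_{ij}|<\delta\pmod{2\pi}$ for all edges simultaneously; choosing $\delta$ small enough makes every sine have the prescribed sign and every cosine strictly positive. The excluded set $\{\gamma:\varepsilon_{ij}\gamma=\pi/4+n\pi\text{ for some }i,j,n\}$ is countable, so one can always pick a $\gamma$ inside the Kronecker-dense open neighborhood while also satisfying the hypothesis of the theorem; finally, set $\beta=1/2$ so that $\sin(\beta\pi)=1>0$.

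The main obstacle is the \emph{simultaneity} constraint: the same $(\beta,\gamma)$ must work at all $n-2$ steps of node selection, even though $v_{t-1}$ and the available set $\mathcal{V}_a$ change with $t$. The key conceptual move that disarms this obstacle is making the sign target for $\sin(\varepsilon_{ij}\gamma)$ depend only on the intrinsic partition $\mathcal{E}=E^{*}\sqcup(\mathcal{E}\setminus E^{*})$, not on $t$; the Hamiltonian-cycle property of $T^{*}$ then guarantees automatically that at each step exactly one candidate lies on an $E^{*}$-edge from $v_{t-1}$. A secondary technical nuisance is to ensure the product of cosines does not vanish and that a strictly positive gap is maintained between the (positive) Q-value of $v_t^{*}$ and the (negative) Q-values of other candidates; since the number of cosine factors is fixed for a given instance and each factor is bounded below by $\cos(\phi_{0}+\delta)>0$, this is handled by taking $\delta$ and $\phi_{0}$ small enough for the given instance. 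The condition $\varepsilon_{ij}\gamma\neq\pi/4+n\pi$ is compatible with this choice because $\phi_{0}\neq\pi/4$ and the forbidden set is a measure-zero union of hyperplanes in $\gamma$-space, so it can be avoided without disturbing the Kronecker approximation. The rational-independence hypothesis is essential for Kronecker's theorem; the generic case is then recovered by the infinitesimal edge-weight perturbation argument already outlined in the main text, which preserves the identity of $T^{*}$.
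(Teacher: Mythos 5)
Your proof is correct and follows essentially the same route as the paper's: both reduce the problem to prescribing the signs of $\sin(\varepsilon_{ij}\gamma)$ and $\cos(\varepsilon_{ij}\gamma)$ across all edges, and both obtain the required $\gamma$ from the density of $(\varepsilon_{ij}\gamma \bmod 2\pi)$ on the torus guaranteed by rational independence (the paper's cited sine-expressivity result is the same Kronecker/Weyl fact you invoke). Your explicit target pattern --- all cosines positive, sine sign equal to the indicator of membership in $E^{*}$ --- disposes of the simultaneity issue more cleanly than the paper's ``unique leading term'' bookkeeping (which is where its hypothesis $\varepsilon_{ij}\gamma\neq\pi/4+n\pi$ enters), and the only small wrinkle, that at the first step two candidates have positive sign, is harmless since either choice traverses $T^{*}$ in one of its two directions.
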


\begin{proof}
As known from \cite{harman2015sin}, we can find a parameter $\omega$ such that we can approximate an arbitrary labeling in $[-1, 1]$ for our rationally independent edge weights with the sine function. Given that this labeling exists, we now show how to use this labeling to generate the optimal tour with the EQC at depth one.

For $p=1$, we can compute the analytic form of the expectation values of our circuit as defined in Equation 12 and Equation 13 as the following, by a similar derivation as in \cite{ozaeta2022expectation},
\begin{equation}\label{eq:anal_expectation_pr}
    \langle O_{v_l} \rangle = \varepsilon_{v_{t-1}, v_l} \cdot \mathrm{sin}(\beta \pi) ~\mathrm{sin}(\varepsilon_{v_{t-1}, v_l} \gamma)
    \cdot \prod_{\underset{k \neq v_{t-1}}{(v_l,k) \in \mathcal{E}}} \mathrm{cos}(\varepsilon_{v_l, k} \gamma),
\end{equation}
where $v_{t-1}$ is the last node in the partial tour and $v_l$ is the candidate node. By the identity $\mathrm{cos}(\theta) = \mathrm{sin}(\frac{\pi}{2} - \theta)$ we can rewrite \Cref{eq:anal_expectation_pr} as
\begin{equation}\label{eq:anal_expectation_sins}
    \langle O_{v_l} \rangle = \varepsilon_{v_{t-1}, v_l} \cdot \mathrm{sin}(\beta \pi) ~\mathrm{sin}(\varepsilon_{v_{t-1}, v_l} \gamma)
    \cdot \prod_{\underset{k \neq v_{t-1}}{(v_l,k) \in \mathcal{E}}} \mathrm{sin}\left(\frac{\pi}{2} - \varepsilon_{v_l, k} \gamma \right).
\end{equation}
Let us now assume that we want to construct a fixed (but arbitrary) tour $T$. First, we notice that the term $\mathrm{sin}(\beta \pi)$ does not depend on $v_{t-1}$ or $v_l$ and is the same for all $v_l$. This means that this term can merely flip the sign of all $\langle O_{v_l} \rangle$, and from now on w.l.o.g.\ we assume that $\beta$ is such that the term is positive. Now we can again formulate the tour generation task in terms of a binary classification problem, where we want to find a configuration of labels for our remaining sin terms in \Cref{eq:anal_expectation_sins} s.t.\ the product will have the highest expectation value in each node selection step for the edge that produces the ordering we have chosen for $T$. Again, we can accomplish this for arbitrary settings of edge weights by only considering the sign of the resulting product. This means that we have to find an assignment of the edges $\varepsilon_{ij}$ to the classes $f_{\pm}$ that at each step of the node selection process will lead to the node being picked that we specify in $T$. As all edges can occur in the above products multiple times during the node selection process, this is a non-trivial task. However, if we can  guarantee that each $\langle O_{v_l} \rangle_t$ at node selection step $t$ contains at least one \textit{unique term} that is only present in this specific expectation value, we can use this term to control the sign of this specific value. Each $\varepsilon_{ij}$ occurs either in the leading term $\mathrm{sin}(\varepsilon_{ij} \gamma)$ (corresponding to the candidate edge to be potentially added in the next step) or in the product term as $\mathrm{sin}(\frac{\pi}{2} -  \varepsilon_{ij} \gamma)$ (corresponding to an outgoing edge from the current candidate). We can easily see that the leading term \textit{only} appears in the case when we ask for this specific $\varepsilon_{ij}$ to be the next edge in the tour, and from Definition 2 we know that this only happens once in the node selection process. In all other expectations, $\varepsilon_{ij}$ appears only with the ``offset" of $\frac{\pi}{2}$. This means that this leading term is the unique term that we are looking for, as long as $\mathrm{sin}(\varepsilon_{ij} \gamma) \neq \mathrm{sin}(\frac{\pi}{2} -  \varepsilon_{ij} \gamma)$, so as long as $\mathrm{sin}(\varepsilon_{ij} \gamma) \neq \mathrm{cos}(\varepsilon_{ij} \gamma)$. We know that $\mathrm{cos}(\theta) = \mathrm{sin}(\theta)$ for $\theta = \frac{\pi}{4} + n \pi$ with $n \in \mathbb{Z}$. So as long as 
\begin{equation}\label{eq:param_condition_universal}
    \varepsilon_{ij} \gamma \neq \frac{\pi}{4} + n \pi ~\forall~ n \in \mathbb{Z}, \varepsilon_{ij} \in \mathcal{E},
\end{equation}
and all $\varepsilon_{ij}$ are unique, our ansatz can construct the desired tour $T$. In this case, we have a guarantee that we can construct the tour $T$ for any configuration of edges that fulfills \Cref{eq:param_condition_universal}. In particular, this means that we can construct the optimal tour in this way.
\end{proof}

Additionally, due to equivariance of our model under node permutations, it is sufficient to find just one of these settings for any tour $T$ and use this to construct the optimal tour. We can do this because once we have found a labeling of the edges for the classes $f_{\pm}$ that produces a given tour sequence, we can simply reshuffle our nodes s.t.\ this sequence of the ``physical nodes" (indices corresponding to qubits) will produce the optimal tour for our ``logical nodes" ((x, y) coordinates of graph nodes). However, note that this can not be used to trivially find the optimal tour, as it amounts to trying all $\frac{(n-1)!}{2}$ possible tours.

\section{Supplementary Figures}
\label{sec:means_p_vals}

To make statements on the statistical significance of the difference between the performance of the EQC and NEQC shown in Figure 5 of the main text, we perform a two-sample t-test on the two models for the same instance sizes (i.e., for the data in the two boxes for each instance size) with the null hypothesis that the averages of the two distributions are the same. Based on this, we compute p-values to quantify the statistical significance of the differences between models. 

\begin{figure}
\includegraphics[width=0.5\textwidth]{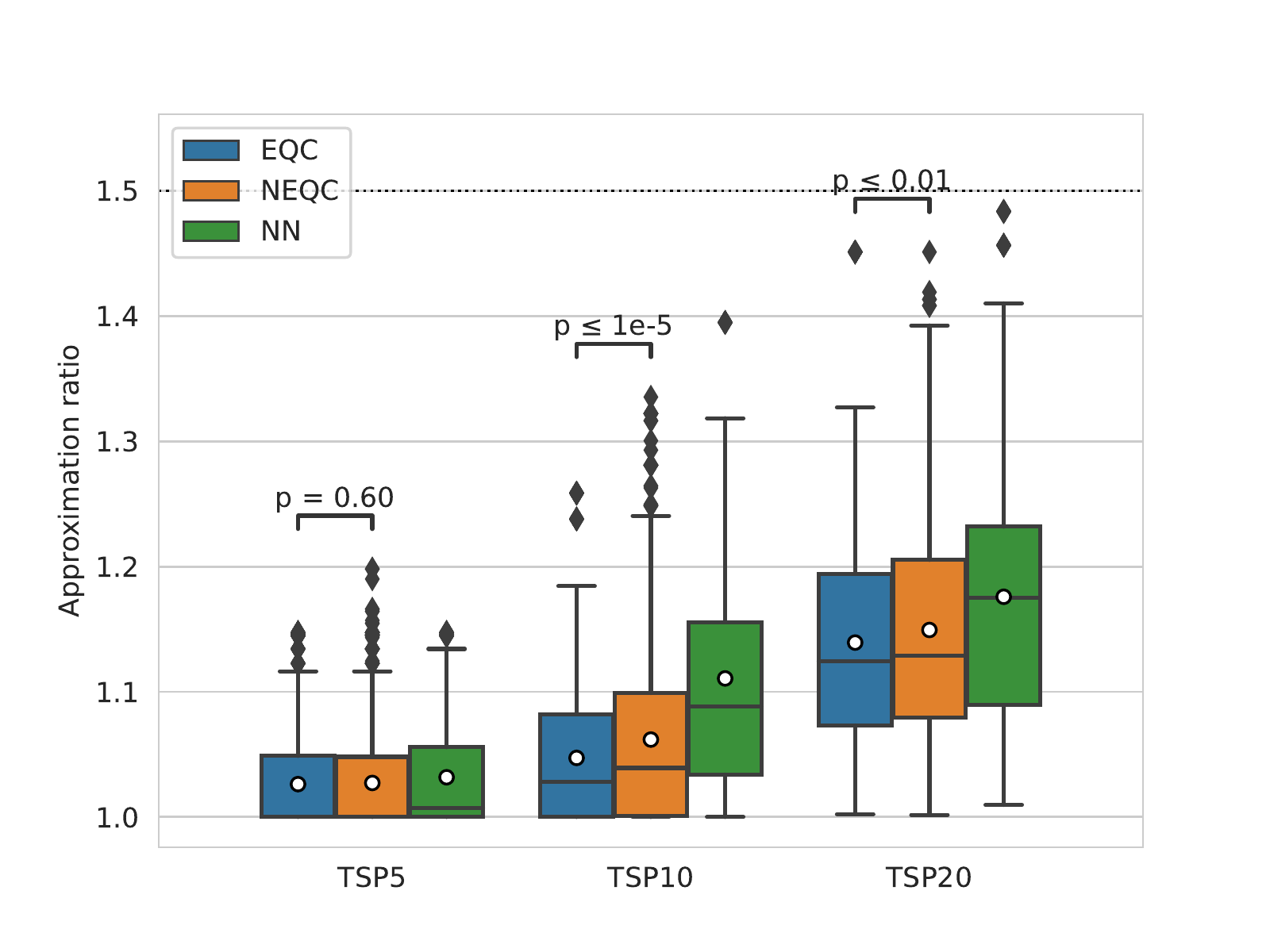}
\caption{P-values for comparison of EQCs and NEQCs at depth one from Figure 5 b).}
\label{fig:pvals_1layer}
\end{figure}

Supplementary \Cref{fig:pvals_1layer} shows p-values for the depth-one EQCs and NEQCs from Figure 5 b). For the 5-city instances, we can not reject the null hypothesis. Indeed, it is already visible by looking at the boxes that the distributions are very similar, which can be expected as the number of permutations of a graph with five vertices is small. However, as we scale up the instance size to ten cities, the corresponding p-value is much smaller than 0.05, which means that we can reject the null hypothesis that the two distributions have the same average with high confidence. This is also the case for the instances with twenty cities, where the p-value is less than 0.01. 

Supplementary \Cref{fig:pvals_4layers} shows p-values for the depth four EQCs and NEQCs from Figure 5 d). Again, the p-value of the 5-city instances is very high with 0.74, so that we can not reject the null hypothesis. Also similarly to the above, the p-values get smaller as we scale up the instance size. For the depth-four ansatzes, the p-value is smallest for the twenty city instances, with a value much smaller than 0.05.

\begin{figure}
\includegraphics[width=0.5\textwidth]{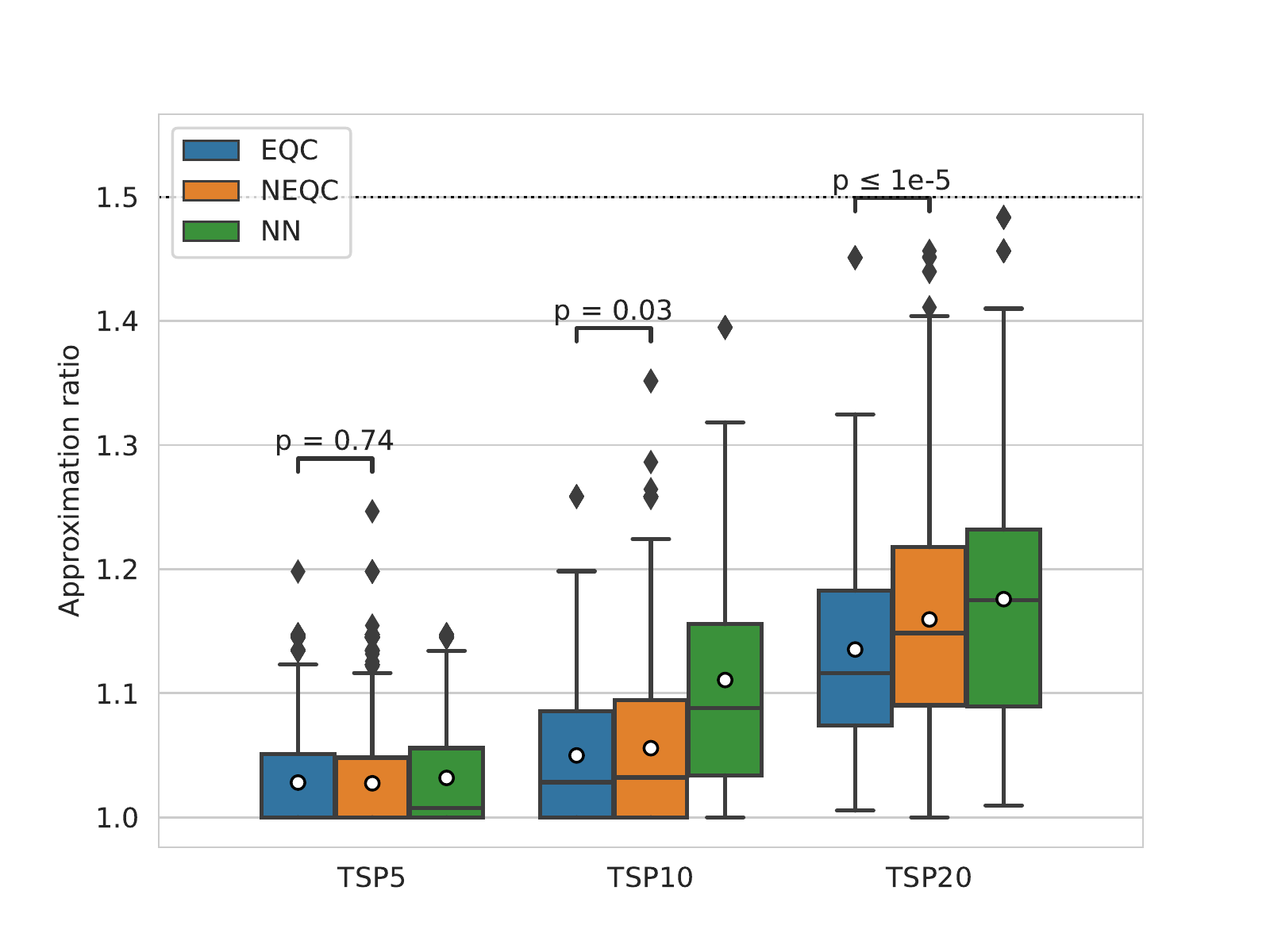}
\caption{P-values for comparison of EQCs and NEQCs at depth four from Figure 5 d).}
\label{fig:pvals_4layers}
\end{figure}

\begin{figure}
\includegraphics[width=0.5\textwidth]{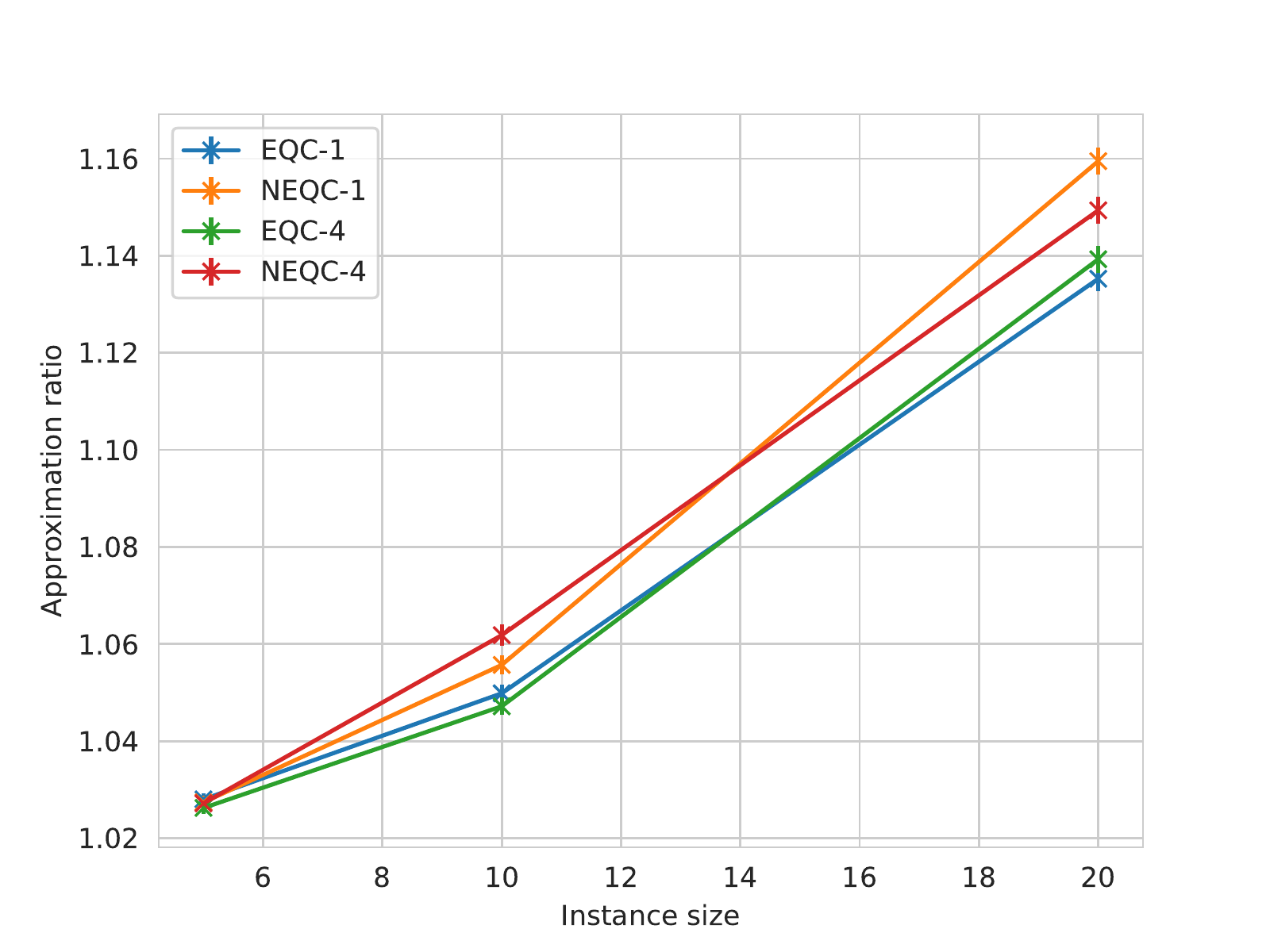}
\caption{Mean and standard error of the mean for the one- and four-layer EQCs and NEQCs in Figure 5 b), d).}
\label{fig:means_eqc_neqc}
\end{figure}
To provide additional insight, we also plot the means and their standard error for both the 1-layer (EQC-1, NEQC-1) and 4-layer (EQC-4, NEQC-4) models in Figure 5. As a rule of thumb, one can expect that when the error bars given by the standard errors of two means do not overlap, the p-value can be smaller than 0.05, while in the case that they do overlap, the p-value is likely much larger. The error bars in Supplementary \Cref{fig:means_eqc_neqc} are in line with this statement, where we see that the error bars for the five-city instances overlap for both circuit depths, while this is not the case for the larger instance sizes and in addition the distance between the means increases for those instance sizes. Remarkably, we also see that the difference between the EQC at depths one and four is very small, and that increasing the circuit depth does not provide much benefit on this learning task.
\twocolumngrid

\end{document}